\documentclass[lettersize,journal]{IEEEtran}
\usepackage{amsmath,amsfonts}
\usepackage{array}
\usepackage{textcomp}
\usepackage{stfloats}
\usepackage{subfigure}
\usepackage{url}
\usepackage{cite}
\usepackage{mathtools}
\usepackage{verbatim}
\usepackage{graphicx}
\usepackage{subfigure}
\usepackage{booktabs}
\usepackage{cite}
\usepackage{tikz}
\usepackage{amsmath,amsthm}
\newtheorem{lemma}{Lemma}  
\usepackage{xcolor}

\newtheorem{theorem}{Theorem}

\usepackage{algorithm}
\usepackage{algpseudocode}

\usepackage[colorlinks=false, pdfborder={0 0 0}]{hyperref}

\definecolor{lime}{HTML}{A6CE39}
\DeclareRobustCommand{\orcidicon}{
\begin{tikzpicture}
\draw[lime, fill=lime] (0,0)
circle[radius=0.16]
node[white]{{\fontfamily{qag}\selectfont \tiny \.{I}D}}; 
\end{tikzpicture}
\hspace{-2mm}
}
\foreach \x in {A, ..., Z}{%
\expandafter\xdef\csname orcid\x\endcsname{\noexpand\href{https://orcid.org/\csname orcidauthor\x\endcsname}{\noexpand\orcidicon}}
}

\graphicspath{{Images00/}}

\begin{document}
\title{Equivalent Flux Compensation for SPMSM Sensorless Control under Parameter Mismatch}
\author{ 

\vskip 1em
	
Fobao Zhou, \emph{Student Member, IEEE}, Jiaqiao Liang, Zhenxiao Yin, \emph{Student Member, IEEE}, \\ Xueyan Wang,
   Yang Shen, \emph{Student Member, IEEE}, Zhongyu Shi, and Hang Zhao$^\ast$\hspace{-1.5mm}\orcidF{}, \emph{Member, IEEE}

\thanks{

    This work is supported by the National Natural Science Foundation of China (No. 52407066), Guangdong Basic and Applied Basic Research Foundation (No. 2024A1515010882) and Guangzhou Municipal Education Bureau 2024 Yangcheng Scholar Project (No. 2024312044).

    Fobao Zhou, Zhenxiao Yin, Xueyan Wang, Yang Shen and Hang Zhao are with the Robotics and Autonomous Systems Thrust at The Hong Kong University of Science and Technology (Guangzhou), Guangzhou 511453,China. 

    Jiaqiao Liang is with Shien-Ming Wu School of Intelligent Engineering at South China University of Technology, Guangzhou 511442, China.

    Zhongyu Shi is with the Department of Engineering Science, University of Oxford, Oxford OX1 2JD, U.K. 
    
	Hang Zhao is the corresponding author (hangzhao@hkust-gz.edu.cn).

}
}
\markboth{Journal of \LaTeX\ Class Files}%
{Shell \MakeLowercase{\textit{et al.}}: A Sample Article Using IEEEtran.cls for IEEE Journals}


\maketitle

\begin{abstract}

Parameter mismatch is the main source of rotor position estimation error in sensorless control of surface-permanent magnet synchronous motors (SPMSMs). To this end, this paper proposes a simple yet efficient equivalent flux compensation (EFC) method that directly estimates the equivalent flux disturbance caused by parameter mismatches in real time. First, the equivalent flux disturbance caused by parameter mismatches is derived from a nonlinear flux observer. Second, a flux update law is proposed to minimize both magnitude and directional errors by leveraging geometric error together with the derived equivalent flux disturbance. To enhance numerical stability, a saturation function is introduced to improve gradient continuity in the update process. Additionally, Lyapunov analysis is employed to ensure the stability of the proposed update law, from which the corresponding error bounds and convergence properties are derived. \textcolor{black}{Finally, experimental results validate that the proposed method fully compensates for the steady-state effects of resistance and flux mismatches, and partially mitigates the influence of inductance variation, effectively constraining the position estimation error within a relatively small range.}

\end{abstract}

\begin{IEEEkeywords}
permanent magnet synchronous motor (PMSM), sensorless control, nonlinear flux observer, parameter mismatch.
\end{IEEEkeywords}

\section{Introduction}

\IEEEPARstart{P}{ermanent} Magnet Synchronous Motors (PMSMs) are widely used in electric vehicles, industrial production, and renewable energy systems due to their high efficiency, high power density, and excellent dynamic performance \cite{liu2016research}. Accurate rotor position information is crucial for high-performance vector control. However, mechanical sensors like encoders and resolvers increase cost and size while reducing reliability in harsh environments. Therefore, sensorless control has become a key research focus for PMSMs, offering clear advantages in cost, reliability, and environmental robustness \cite{wang2019position}.

In recent years, various technical approaches have been developed in the field of sensorless control for PMSMs, primarily including back electromotive force (back-EMF) methods \cite{bolognani2014design, he2023optimization, zhang2023commutation}, model reference adaptive systems (MRAS) \cite{prabhakaran2020electromagnetic, yan2022mras}, extended Kalman filters (EKFs) \cite{verrelli2019speed, xiang2025sensorless}, sliding mode observers (SMOs) \cite{wu2024sensorless, yang2024rotor}, and nonlinear flux observers \cite{lee2009sensorless, 11099531, ortega2010estimation, yin2026nonlinear, khlaief2011nonlinear}. Among these, nonlinear flux observers are widely used in practical applications owing to their simple structure and fast dynamic response.

However, the practical implementation of the aforementioned sensorless control methods commonly faces a fundamental challenge: parameter mismatch \cite{xu2018improved, zhou2023robust, bernard2020estimation}. The performance of various observers strongly relies on the accuracy of motor parameters, particularly the stator resistance \cite{hinkkanen2011combined}, inductance \cite{li2015position}, and flux \cite{li2018sensorless}. In real-world operation, factors such as temperature rise, magnetic saturation, and aging effects cause continuous drift in motor parameters, resulting in inherent deviations between controller nominal values and actual parameters \cite{bernard2018convergence, islam2013sensorless}. These deviations directly lead to observation errors, which in turn cause inaccuracies in rotor position estimation and may even induce system instability. Consequently, parameter mismatch has become a critical bottleneck restricting further improvements in sensorless control performance, and is one of the primary sources of rotor position estimation error \cite{choi2016robust, li2025sensorless}.

To address the issue of parameter mismatch in sensorless control, existing approaches can be broadly classified into two categories: parameter identification methods and robust observer methods. Parameter identification methods aim to mitigate the impact of parameter mismatch on the system through online parameter estimation and dynamic compensation \cite{kivanc2018sensorless}. In \cite{hamida2012adaptive}, a sensorless control strategy incorporating an auxiliary filtering system and a gain matrix based on an adaptive interconnected observer was proposed. This method enables simultaneous estimation of rotor speed, rotor position, load torque, stator inductance, and resistance. In \cite{liu2022second}, a fault-tolerant sensorless strategy based on a second-order extended state observer (ESO) and a dual-MRAS observer was proposed. This approach enables the estimation of stator resistance, inductance, and flux using only a single current sensor, thereby enhancing system reliability. Although these methods are effective, they have practical limitations. Simultaneous identification of multiple parameters can significantly increase computational complexity, while strong coupling between identification and state estimation, as well as convergence issues, poses significant challenges \cite{hamida2012adaptive}.

Another approach, namely robust observer techniques, enhances system robustness by incorporating adaptive laws or robust control structures, thereby reducing sensitivity to parameter variations \cite{nguyen2013modeling, zhou2023robust}. In \cite{wang2024eso}, a robust sensorless control method that integrates an ESO with a hybrid flux observer was proposed to mitigate the adverse impact of inductance and flux errors on position estimation accuracy. Similarly, in \cite{woldegiorgis2022sensorless}, an approach based on an extended SMO and a back-EMF compensator is introduced, which effectively suppresses the negative effects of stator resistance and flux variations, thereby significantly improving the robustness of the overall control scheme. \textcolor{black}{Furthermore, an improved linear active disturbance rejection control (ILADRC) strategy is proposed in \cite{11025165}, utilizing an adaptive harmonic filtering ESO to simultaneously mitigate the effects of parameter mismatch and harmonic disturbances on back-EMF estimation}. However, the aforementioned robust observer typically require an increase in the observer's order or the introduction of additional nonlinear terms, which inevitably lead to higher parameter tuning complexity and potential chattering issues \cite{wang2024eso,woldegiorgis2022sensorless,11025165}.

Overall, existing parameter identification and robust observer methods are limited by computational complexity, parameter coupling, structural design, and real-time performance. Thus, there is an urgent need for a controller with a simple structure, efficient computation, decoupled parameters, and superior real-time performance. To address the aforementioned issues, this paper integrates concepts from parameter identification and robust observers to propose a sensorless control method based on a nonlinear flux observer that incorporates equivalent flux compensation (EFC). \textcolor{black}{Specifically, an error dynamics model is established to clarify how resistance, inductance, and flux mismatches influence state estimation. On this basis, the combined effects of parameter deviations are mapped into the flux domain, where the resulting offset components are absorbed into an equivalent flux variable. A concise and computationally efficient update law is proposed to adaptively eliminate these bias terms. Robustness is therefore achieved through flux compensation rather than high-gain disturbance rejection or multi-parameter identification.} In summary, the main contributions of this article are as follows:
\begin{enumerate}

\item An equivalent flux update law is proposed to address parameter mismatch in nonlinear flux observers. The multi-parameter mismatch problem involving resistance, inductance, and flux is transformed into an adaptive flux compensation problem, leading to a structurally simple and decoupled implementation.

\item \textcolor{black}{Without introducing complex nonlinear or high-order robust structures, the proposed scheme achieves complete steady-state compensation for resistance and flux mismatches, and partially mitigates the influence of inductance variations, thereby enhancing robustness under wide parameter and load variations.}

\item A detailed analysis of the effects of parameter mismatch on the error dynamics and stability of flux observers is conducted, providing a theoretical foundation for evaluating system performance under mismatch conditions.

\end{enumerate}

\section{\textcolor{black}{Preliminaries} }

This section briefly outlines the mathematical model of the SPMSM and the structure of the nonlinear observer.

\subsection{Mathematical Model of SPMSM}
 The SPMSM is modeled in $\alpha {-} \beta$ frame as follow:
\begin{equation}
    L_s \dot i_{\alpha \beta} = - R_s i_{\alpha \beta} + v_{\alpha\beta} - e_{\alpha\beta}
\label{eq1}
\end{equation}
where $L_s$ and $R_s$ denote the actual inductance and stator resistance, respectively; $i_{\alpha\beta} = [i_\alpha, i_\beta]^\top$ and $v_{\alpha\beta} = [v_\alpha, v_\beta]^\top$ are the current and voltage measured in the $\alpha$–$\beta$ reference frame. The back-EMF $e_{\alpha\beta}$ has the following form:
\begin{equation}
	e_{\alpha\beta}=\omega\psi_m \begin{bmatrix}-\sin\theta \\\cos\theta \end{bmatrix}
\label{eq2}
\end{equation}
where $\omega$, $\psi_m$, and $\theta$ are the electrical angular velocity, actual permanent magnet flux linkage, and electrical rotor position, respectively. The primary goal of sensorless control is to estimate the rotor speed $\omega$ and position $\theta$.

\subsection{Nonlinear Flux Observer}

This section briefly reviews the fundamental design of the nonlinear flux observer. First, an extended stator flux vector $x$ and its time derivative $\dot{x}$ are defined based on \eqref{eq1} and \eqref{eq2}:
\begin{equation}
    \textcolor{black}{x = L_s i_{\alpha\beta} + \psi_m \begin{bmatrix}\cos\theta \\ \sin\theta\end{bmatrix}, \
    \dot x = -R_s i_{\alpha\beta} + v_{\alpha\beta} =  y }
\label{eq3}
\end{equation}

By defining an auxiliary flux vector $\eta(x) = x - L_s i_{\alpha\beta}$, the geometric constraint is naturally satisfied $\|\eta(x)\|^2 = \psi_m^2$. Based on this inherent property, the nonlinear observer is constructed as follows:
\begin{equation}
	\dot{\hat{x}} = y + \phi \eta(\hat{x}) \left( \psi_{m}^{2} - \|\eta(\hat{x})\|^{2} \right)
\label{eq4}
\end{equation}
where $\hat{x} \in \mathbb{R}^2$ denotes the estimated state, $\eta(\hat{x}) = \hat{x} - L_s i_{\alpha\beta}$, and $\phi > 0$ is the observer gain. 

\textcolor{black}{Finally, based on the geometric relationship in \eqref{eq3}, the estimated position $\hat{\theta}$ is obtained directly as follows:}
\begin{equation}
	\hat{\theta} = \tan^{-1}\left(\frac{\hat{x}_2 - L_s i_\beta}{\hat{x}_1 - L_s i_\alpha}\right)
\label{eq5}
\end{equation}

Furthermore, the error dynamics of the flux observer are considered for the case without parameter mismatches. Defining the observation error as $\tilde{x} = x - \hat{x}$, its derivative follows as $\dot{\tilde{x}} = \dot x - \dot{\hat{x}}$, leading to the following error dynamics:
\begin{equation}
    \dot{\tilde{x}} = - \phi \eta(\hat{x})(\psi_m^2 - \|\eta(\hat{x})\|^2) 
\label{eq8}
\end{equation}

According to \eqref{eq3} and $\eta(\hat{x}) = x-L_si_{\alpha\beta} - \tilde{x}$, we have:
\begin{equation}
    \|\eta(\hat{x})\|^2 = \psi_m ^2 - 2 \psi_m(\tilde{x}_1 \cos \theta + \tilde{x}_2 \sin \theta) + \|\tilde{x}\|^2
\label{eq9}
\end{equation}

Substituting \eqref{eq9} into \eqref{eq8} yields:
\begin{equation}
    \dot{\tilde{x}} = \phi \xi_1(\tilde{x}) \left( \tilde{x} - \psi_m \begin{bmatrix}
        \cos \theta \\
        \sin \theta
    \end{bmatrix} \right)
\label{eq10}
\end{equation}
where $\xi_1(\tilde{x}) = 2 \psi_m (\tilde{x}_1 \cos \theta + \tilde{x}_2 \sin \theta)  - \|\tilde{x}\|^2$.

The error dynamics of the nonlinear flux observer under accurate parameters are described  by  \eqref{eq8}-\eqref{eq10}. Based on \eqref{eq5}, the rotor position can then be estimated. However, many studies and experiments have shown that parameter mismatches, such as errors in resistance, inductance, or flux, can significantly degrade estimation accuracy. Therefore, the central objective of this paper is to address the impact of parameter mismatches on position estimation.

\section{Error Dynamics with Parameter Mismatch}

This section analyzes the error dynamics of the nonlinear flux observer under parameter mismatches and discusses a potential compensation strategy.

\subsection{Nonlinear Flux Observer of Parameter Mismatch}
The parameter deviations are defined as $\Delta R_s = R_s - {R}_z $, $\Delta L_s = L_s - {L}_z $, and $\Delta \psi_m = \psi_m - {\psi}_z $, where the subscript $(\cdot)_z$ denotes the nominal values in the motor, and the others represent the actual motor parameters. The flux observer under parameter mismatches is given as follows:
\begin{equation}
    \left\{ \begin{matrix}
   {{{\dot{x}}}^{*}}=-{{R}_{z}}{{i}_{\alpha \beta }}+{{v}_{\alpha \beta }}={{y}^{*}}  \\[2pt]
   \left\| \eta ({{{\hat{x}}}^{*}}) \right\|^2=\left\| {{{\hat{x}}}^{*}}-{{L}_{z}}{{i}_{\alpha \beta }} \right\|^2={{\psi }_{z}^2}  \\[2pt]
   {{{\dot{\hat{x}}}}^{*}}={{y}^{*}}+\phi ({{{\hat{x}}}^{*}}-{{L}_{z}}{{i}_{\alpha \beta }})\left( \psi _{z}^{2}-{{\left\| \eta ({{{\hat{x}}}^{*}}) \right\|}^{2}} \right)  \\[2pt]
    \textcolor{black}{\hat{\theta}^* = \tan^{-1} \left(({\hat{x}^*_2 - L_z i_\beta})/({\hat{x}^*_1 - L_z i_\alpha}) \right)}
\end{matrix} \right.
\label{eq12}
\end{equation}
\textcolor{black}{where \(\dot{x}^*\), \(y^*\), \(\hat{x}^*\), and \(\eta(\hat{x}^*)\) are the observer states under parameter mismatches, and \(\hat{\theta}^*\) is the estimated position under these mismatches.}

With the definitions $\tilde{x}^* = x - \hat{x}^*$, $\dot{\tilde{x}}^* = \dot{x} - \dot{\hat{x}}^*$, $\tilde{\theta} = \theta - \hat{\theta}$ and $\tilde{\theta}^* = \theta - \hat{\theta}^*$, the following relation is obtained:
\begin{equation}
    \dot{\tilde{x}}^* = -\Delta R_s i_{\alpha \beta } - \phi (\hat {x}^* -  L_z i_{\alpha \beta })(\psi_z^2 - \| \eta(\hat{x}^*) \|^2)
\label{eq14}
\end{equation}

\subsection{Lemma}
\textcolor{black}{To facilitate the subsequent analysis, the following lemmas are presented in advance for clarification.  }

\begin{lemma} \label{lemma 1}
    \textcolor{black}{ For $\forall \ a \in \mathbb{R}$, the following holds:}
    \begin{equation} 
        \textcolor{black}{ |\tanh(a)| \geq\frac{|a|}{1+|a|}  }
\label{eq11}
\end{equation}
\end{lemma}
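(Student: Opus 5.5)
By oddness of both sides in $a$ (since $\tanh$ is odd and $|a|$ is even), it suffices to prove the inequality for $a \ge 0$. In that range it reads $\tanh(a) \ge a/(1+a)$. At $a=0$ both sides vanish, so we may assume $a>0$.

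The plan is to rewrite $\tanh$ in exponential form and clear denominators. We have
\[
\tanh(a)=\frac{e^{2a}-1}{e^{2a}+1}.
\]
Both denominators $e^{2a}+1$ and $1+a$ are positive, so the desired inequality is equivalent to
\[
(e^{2a}-1)(1+a)\ \ge\ a\,(e^{2a}+1).
\]
Expanding both sides, the $a e^{2a}$ terms cancel. The condition reduces to
\[
e^{2a}-1-a \ \ge\ a,
\quad\text{i.e.}\quad
e^{2a}\ge 1+2a,
\]
which is the standard convexity bound $e^t\ge 1+t$ applied with $t=2a$.

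There is no real obstacle here. The only step needing care is checking the algebraic reduction, including the signs when multiplying through. If one prefers a calculus argument instead, an alternative is to set
\[
g(a)=\tanh(a)-\frac{a}{1+a}.
\]
Then $g(0)=0$ and
\[
g'(a)=\operatorname{sech}^2(a)-\frac{1}{(1+a)^2}.
\]
So one would need $\cosh(a)\le 1+a$, but this fails for large $a$. The exponential cross-multiplication route is therefore the one to commit to.
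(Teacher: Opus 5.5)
Your proof is correct and is essentially the paper's: the paper also reduces to $a\ge 0$, rewrites the claim as $1-\tanh(a)=\frac{2}{e^{2a}+1}\le\frac{1}{1+a}$, and rearranges to $e^{2a}\ge 1+2a$, which is exactly the inequality you get after cross-multiplying. One wording slip: both sides are \emph{even} functions of $a$ (not odd), and that evenness is what justifies restricting to $a\ge 0$.
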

\begin{proof} 
    \textcolor{black}{The proof of Lemma \ref{lemma 1} is provided in Appendix \ref{proof_Lemma1}.}
\end{proof}

\begin{lemma} \label{lemma 2} 

\textcolor{black}{Under the Park transformation, define $i_d = i_\alpha \cos \theta + i_\beta \sin \theta$ and $\hat{i}_d^* = i_\alpha \cos \hat{\theta}^* + i_\beta \sin \hat{\theta}^*$. The following relationship holds:}
    \begin{equation} \textcolor{black}{\left\{ \begin{matrix}
        Q_1 = \tilde{x} ^\top \left[\cos \theta; \sin \theta \right] = \psi_m (1 - \cos\tilde{\theta}) \\[3pt]
        Q_1^* = \tilde{x}^{* \top} \left[\cos \theta; \sin \theta \right] = \Delta L_s i_d + \psi_m - \psi_z \cos \tilde{\theta}^* \\[3pt]
        Q_2^* = \tilde{x}^{*\top} i_{\alpha\beta} =   \Delta L_s \|i_{\alpha\beta}\|^2 + \psi_m i_d - \psi_z \hat{i}_d^*
\label{A80}
\end{matrix} \right. }
\end{equation}
\end{lemma}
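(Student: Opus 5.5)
The plan is to write both the true state $x$ and the estimated states $\hat{x}$, $\hat{x}^*$ explicitly in polar form, and then take inner products. Everything follows from the definitions in \eqref{eq3} and \eqref{eq12}.

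\textbf{Step 1: express $x$, $\hat{x}$ and $\hat{x}^*$.} From \eqref{eq3}, $x = L_s i_{\alpha\beta} + \psi_m[\cos\theta;\sin\theta]$. For the ideal observer, the position estimate \eqref{eq5} is defined as the angle of $\eta(\hat{x}) = \hat{x} - L_s i_{\alpha\beta}$. I will assume, as is implicit in the lemma, that the geometric constraint holds on the observer manifold, i.e. $\|\eta(\hat{x})\| = \psi_m$. Then
\[
\hat{x} = L_s i_{\alpha\beta} + \psi_m[\cos\hat{\theta};\sin\hat{\theta}].
\]
Likewise, \eqref{eq12} imposes $\|\hat{x}^* - L_z i_{\alpha\beta}\| = \psi_z$, with $\hat{\theta}^*$ the angle of that vector. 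Hence
\[
\hat{x}^* = L_z i_{\alpha\beta} + \psi_z[\cos\hat{\theta}^*;\sin\hat{\theta}^*].
\]
Subtracting gives
\[
\tilde{x} = \psi_m\big([\cos\theta;\sin\theta]-[\cos\hat{\theta};\sin\hat{\theta}]\big),
\]
\[
\tilde{x}^* = \Delta L_s\, i_{\alpha\beta} + \psi_m[\cos\theta;\sin\theta] - \psi_z[\cos\hat{\theta}^*;\sin\hat{\theta}^*].
\]

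\textbf{Step 2: project onto $[\cos\theta;\sin\theta]$.} For $Q_1$, the identity $\cos\theta\cos\hat\theta + \sin\theta\sin\hat\theta = \cos(\theta-\hat\theta) = \cos\tilde\theta$ gives $Q_1 = \psi_m(1-\cos\tilde\theta)$. For $Q_1^*$, the first term projects to $\Delta L_s (i_\alpha\cos\theta + i_\beta\sin\theta) = \Delta L_s i_d$ by the definition of $i_d$. The second term gives $\psi_m$, and the third gives $-\psi_z\cos\tilde\theta^*$ by the same angle-difference identity.

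\textbf{Step 3: project onto $i_{\alpha\beta}$ for $Q_2^*$.} The three terms give, respectively, $\Delta L_s\|i_{\alpha\beta}\|^2$, then $\psi_m(i_\alpha\cos\theta + i_\beta\sin\theta) = \psi_m i_d$, and finally $-\psi_z(i_\alpha\cos\hat\theta^* + i_\beta\sin\hat\theta^*) = -\psi_z\hat{i}_d^*$. This matches the stated definition of $\hat{i}_d^*$.

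\textbf{Main obstacle.} The only subtle point is Step 1: justifying that the estimated auxiliary flux vectors have norms exactly $\psi_m$ and $\psi_z$. For the mismatched case, \eqref{eq12} lists $\|\eta(\hat{x}^*)\|^2 = \psi_z^2$ as part of the observer description, so it can be taken as given. For the ideal case, I would state explicitly that the relation is evaluated on the invariant constraint set $\|\eta(\hat{x})\| = \psi_m$, which the observer drives toward. Alternatively, one can note that the arctangent in \eqref{eq5} recovers only the direction, so the magnitude must be fixed by this constraint. The four-quadrant arctangent is also needed so that the angle is recovered unambiguously. Everything else is routine trigonometry.
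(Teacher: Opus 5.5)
Your proposal is correct and follows the paper's own proof. The paper likewise substitutes $x-\hat{x}=\psi_m([\cos\theta;\sin\theta]-[\cos\hat{\theta};\sin\hat{\theta}])$ and $x-\hat{x}^*=\Delta L_s i_{\alpha\beta}+\psi_m[\cos\theta;\sin\theta]-\psi_z[\cos\hat{\theta}^*;\sin\hat{\theta}^*]$ componentwise, then expands the inner products with the angle-difference identity; the only difference is that you state explicitly the norm assumptions $\|\eta(\hat{x})\|=\psi_m$ and $\|\eta(\hat{x}^*)\|=\psi_z$, which the paper uses tacitly.
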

\begin{proof}
    \textcolor{black}{The proof of Lemma \ref{lemma 2} is provided in Appendix \ref{proof_Lemma2}.}
\end{proof}

\subsection{Error Dynamics of Parameter Mismatch}
The expression $\eta(\hat{x}^*)$ is considered, which yields:
\begin{equation}
     \eta(\hat{x}^*)= \hat{x}^* - L_z i_{\alpha \beta }  = \Delta L_s i_{\alpha \beta } + \psi_m \begin{bmatrix} \cos \theta  \\ \sin \theta \end{bmatrix} - \tilde{x}^*
\label{eq15}
\end{equation}
 
{From \eqref{A80} and \eqref{eq15}, we have:
\begin{equation}  \textcolor{black}{
    \begin{aligned}
        \| \eta(\hat{x}^*) \|^2  & = \psi_m^2 + \|\tilde{x}^*\|^2 - 2 \psi_m Q_1^* \\
        & \quad - \Delta L_s^2 \|i_{\alpha\beta}\|^2 + 2 \Delta L_s \psi_z \hat{i}_d^* 
    \end{aligned}
\label{eq16} }
\end{equation}

Substituting \eqref{eq16} into \eqref{eq14} yields:
\begin{equation}
    \begin{aligned}
        \dot{\tilde{x}}^* = &  
        \phi \Big( \tilde{x}^* - \psi_m \begin{bmatrix} \cos \theta \\ \sin \theta  \end{bmatrix}  - \Delta L_s i_{\alpha \beta } \Big) \left( \xi_1^* + \xi_2^* \right) -\Delta R_s i_{\alpha \beta }
    \end{aligned}
\label{eq17}
\end{equation}
where
\begin{equation} \textcolor{black}{
    \xi _1^*  = 2\psi_m Q_1^* - \|\tilde{x}^*\|^2 }
\label{eq18}
\end{equation}
\begin{equation} \textcolor{black}{
    \begin{aligned}
        \xi_2^* = \psi_z^2 - \psi_m^2 + \Delta L_s^2 \|i_{\alpha\beta}\|^2 - 2 \Delta L_s \psi_z \hat{i}_d^*
    \end{aligned}
\label{eq19}}
\end{equation}

Expanding \eqref{eq17} leads to the following expression:
\begin{equation}
    \begin{aligned}
        \dot{\tilde{x}}^* & =  \phi  \xi _1^* \left( \tilde{x}^* - \psi_m \begin{bmatrix} \cos\theta \\ \sin\theta \end{bmatrix} \right) \\
         & \quad + \phi  \xi _2^* \left( \tilde{x}^* - \psi_m \begin{bmatrix} \cos\theta \\ \sin\theta \end{bmatrix} \right) \\
         & \quad - \phi \left(\xi _1^* + \xi _2^*\right) \Delta L_s i_{\alpha \beta } -\Delta R_s i_{\alpha \beta }
    \end{aligned}
\label{eq21}
\end{equation}

Obviously, by comparing \eqref{eq10} and \eqref{eq21}, \textcolor{black}{it can be seen that parameter mismatches introduce additional observation error $\Delta \dot{\tilde{x}}^*$ in the flux observer:}
\begin{equation}
    \begin{aligned}
        \Delta \dot{\tilde{x}}^*
          =  &  \phi  \xi _2^* \left( \tilde{x}^* - \psi_m \begin{bmatrix} \cos\theta \\ \sin\theta \end{bmatrix} \right)  \\
         & - \phi \left(\xi _1^* + \xi _2^*  \right) \Delta L_s i_{\alpha \beta } -\Delta R_s i_{\alpha \beta }
    \end{aligned}
\label{eq22}
\end{equation}

The impact of parameter mismatches, including resistance, inductance, and flux deviations, on the observer state variables is presented in \eqref{eq21} and \eqref{eq22}. Two key characteristics of parameter mismatches in the flux observer can be identified:
\begin{itemize}
    \item The influence of parameter mismatches is not constant but varies with both rotor angle and current due to the underlying parameter deviations, thereby directly compromising the accuracy of angle estimation.
    \item During long-term operation in industrial applications, motor parameters inevitably change over time. These parameter variations increase oscillations and reduce the accuracy in angle estimation. Therefore, addressing the degradation of accuracy caused by parameter mismatches is crucial.
\end{itemize}

\subsection{Compensation Strategy for Parameter Mismatch}
To address the aforementioned issues, this paper proposes transferring the effects of angle estimation degradation caused by parameter inaccuracies to the flux domain and mitigating them by compensating for the equivalent flux.

\textcolor{black}{From \eqref{eq10}, \eqref{eq21} and \eqref{eq22}, the simplest approach is to identify an equivalent error flux $\Delta \psi_z$ to compensate for $\Delta \dot{\tilde{x}}^*$. Evidently, $\Delta \psi_z$ should satisfy the following condition:}
\begin{equation} \textcolor{black}{
     \begin{aligned}
        \dot{\tilde{x}}^* & =  \phi  \xi _1^* \left( \tilde{x}^* - \big(\psi_m + \Delta \psi_z \big) \begin{bmatrix} \cos\theta \\ \sin\theta \end{bmatrix} \right) + \Delta \dot{\tilde{x}}^*  \\
        &  = \phi  \xi _1^* \left( \tilde{x}^* - \psi_m \begin{bmatrix} \cos\theta \\ \sin\theta \end{bmatrix} \right)
\end{aligned}
\label{eq22+1}}
\end{equation}

\textcolor{black}{From \eqref{eq22+1}, we have:}
\begin{equation} \textcolor{black}{
     \phi  \xi _1^* \Delta \psi _z \begin{bmatrix} \cos\theta \\ \sin\theta \end{bmatrix} =  \Delta \dot{\tilde{x}}^* \ \Rightarrow  \ \Delta \psi _z = \eta^\top(x) \frac{ \Delta \dot{\tilde{x}}^*}{ \phi  \xi _1^*  \psi _m}
\label{eq23}}
\end{equation}

\textcolor{black}{Thus, the corresponding equivalent flux should be:}
\begin{equation} \textcolor{black}{
    \hat{\psi }_z = {\psi }_m + \Delta \psi _z = \eta^{\top}(x) \left(  \frac{\phi  \xi _1^* \eta (x) + \Delta \dot{\tilde{x}}^*}{ \phi \xi _1^*\psi _m} \right)
\label{eq25}}
\end{equation}
where $ \hat{\psi }_z $ represents the equivalent flux that combines the nominal flux with the compensation for parameter mismatches.

Unfortunately, the exact value of \eqref{eq25} is unknown. However, by applying the scalar projection formula, the relationship between $\hat{\psi}_z$ and  $\eta(x)$ can be established, allowing \eqref{eq25} to be reformulated as follows:
\begin{equation}
    \hat{\psi }_z =  \|\eta(x)\| \cdot s_p \Rightarrow s_p > 0
\label{eq26}
\end{equation}
\begin{equation}
    s_p = \mathrm{comp}_{\eta(x)} \left( \frac{\phi  \xi _1^* \eta (x) + \Delta \dot{\tilde{x}}^*}{ \phi \xi _1^* \psi _m} \right)
\label{eq27}
\end{equation}
where $s_p = \text{comp}_{(\cdot)}(\star)$ denotes the scalar projection of vector $(\star)$ onto the direction of vector $(\cdot)$.

From \eqref{eq26}, the following approximate relationship can be obtained:
\begin{equation}
    \hat{\psi}_z \propto \|\eta(x)\|  \ \Rightarrow  \ \hat{\psi}_z \propto \|\eta(\hat{x}^*)\|
\label{eq28}
\end{equation}

\textcolor{black}{In practical systems, the estimated flux $\hat{\psi}_z$ remains strictly positive, which ensures that $s_p$ is also positive and bounded. As illustrated in Fig. \ref{Vector_Error}, the residual $\|\eta(x)-\eta(\hat{x}^*)\|$ remains relatively small, compared to fundamental harmonics, even under a $\pm50\%$ step change in resistance at 1000 rpm and full load, confirming the validity of the approximation in \eqref{eq28}.}

 \vspace{-0.3cm}
\begin{figure}[!t] 
    \centering
    \includegraphics[width=0.7\linewidth]{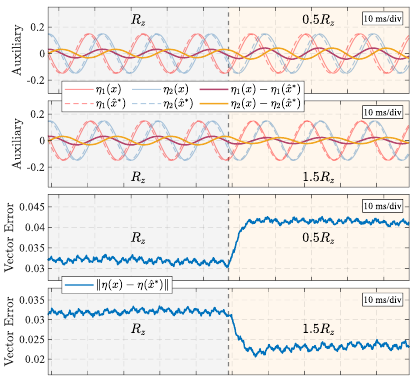} 
    \caption{\textcolor{black}{Dynamic response of $\|\eta(x)-\eta(\hat{x}^*)\|$ in the nonlinear flux observer under resistance variation at 1000 rpm and full load.} }
    \label{Vector_Error}
\end{figure}

\section{Equivalent Flux Update Law}
In the previous section, the relationship governing the equivalent flux was derived. In this section, a flux update law is designed based on a loss function constructed from the geometric structure, considering both the magnitude and directional errors. 

In the $\alpha$-$\beta$ reference frame, the stator flux can be expressed as a two-dimensional vector $\psi_{m,\alpha\beta}$, given by:
\begin{equation}
    \psi_{m,\alpha\beta} = \begin{bmatrix} \psi_{m,\alpha} \\ \psi_{m,\beta} \end{bmatrix}
    = \psi_m \begin{bmatrix} \cos\theta \\ \sin\theta \end{bmatrix}
\label{eq29}
\end{equation}
\textcolor{black}{where $\psi_m=\sqrt{\psi_{m,\alpha}^2+\psi_{m,\beta}^2}$ represents the magnitude, and $n = [\cos\theta; \sin\theta]$ denotes the unit direction vector.
Thus, we have $\psi_{m,\alpha\beta} = \psi_m \cdot n = \eta(x)$.}

First, for the estimated state $\hat{x}^*$, \textcolor{black}{the estimated flux direction $\hat{n}$ and the flux observation can be written as:}
\begin{equation}
    \hat{n} = \frac{\eta(\hat{x}^*)}{\|\eta(\hat{x}^*)\|}
    = \begin{bmatrix} \cos\hat{\theta}^* \\ \sin\hat{\theta}^* \end{bmatrix}, \ \eta(\hat{x}^*) = \hat{\psi}_z \begin{bmatrix} \cos\hat{\theta}^* \\ \sin\hat{\theta}^* \end{bmatrix}
\label{eq30}
\end{equation}

To minimize both magnitude and direction errors, the loss function $\mathcal{T}(\hat{x}^*)$ is designed as:
\begin{equation}
    \begin{aligned}
        \mathcal{T}(\hat{x}^*) & = \frac{1}{2} \big\| \hat{\psi}_z \hat{n} - \eta(\hat{x}^*) \big\|^2 \\
        & = \frac{1}{2} (\| \eta (\hat{x}^*) \| - \hat{\psi }_z  )^2
    \end{aligned}
\label{eq35}
\end{equation}

Next, based on \eqref{eq35}, the following equivalent flux update law is proposed:
\begin{equation}
     \dot{\hat{\psi }}_z= - \hat{s}_{p1} \frac{\partial \mathcal{T} (\hat{x}^*)}{\partial \hat{\psi }_z} = \hat{s}_{p1} (\| \eta (\hat{x}^*)  \| - \hat{\psi }_z )
\label{eq36}
\end{equation}
where $\hat{s}_{p1} >0$ denotes the gain of the equivalent flux update law. This formulation achieves linear asymptotic convergence by simultaneously regulating both the magnitude and direction of the flux.

To further enhance the stability and convergence of the flux update law, \eqref{eq36} is modified into a nonlinear update form with the following saturation characteristics:

\begin{equation}
    \dot{\hat{\psi }}_z = \hat{s}_{p1} \tanh \left(\hat{s}_{p2}(\| \eta (\hat{x}^*) \| - \hat{\psi }_z) \right)  
\label{eq37}
\end{equation}
where $\hat{s}_{p2} >0 $ is also a gain for the equivalent flux update law. In the subsequent stability analysis, it is proven that \(\hat{s}_{p1}\) and \(\hat{s}_{p2}\) accelerate the convergence of the equivalent flux.

The improved nonlinear equivalent flux update law offers the following advantages:
\begin{itemize}
    \item First, the improved update law \eqref{eq37} ensures gradient continuity, improving numerical stability and avoiding saturation dead zones. This update mechanism improves the convergence and robustness of equivalent flux estimation.
    \item Second, during startup or transient operating conditions, significant estimation errors and disturbances may occur. The linear update law \eqref{eq36} can lead to divergence or overshoot of $\dot{\psi}_z$, resulting in flux estimation errors. The $\tanh(\cdot)$ function provides near-linear accuracy for small errors while limiting the update rate for large errors, effectively preventing oscillations and overshoot.
\end{itemize}

The overall block diagram and the algorithmic flowchart of the proposed control strategy are presented in Fig.~\ref{Structure diagram} and Algorithm \ref{alg:flux_observer}, respectively.

\begin{figure}[!t]
    \centering
    \includegraphics[width=0.97\linewidth]{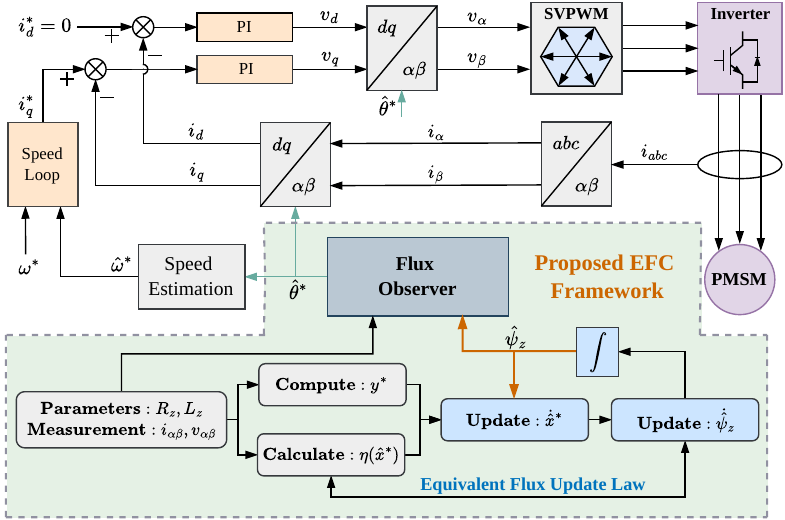} 
    \caption{\textcolor{black}{Proposed control scheme based on equivalent flux update law.} }
    \label{Structure diagram}
\end{figure}

\begin{algorithm}[!t]
\caption{Proposed EFC Method}
\label{alg:flux_observer}
\begin{algorithmic}[1]
\State \textbf{Motor Parameters:} $L_z$, $R_z$
\State \textbf{Control Input:} $\phi$, $\hat{s}_{p1}$, $\hat{s}_{p2}$
\State \textbf{Output:} Position $\hat{\theta}^*$
\Repeat
    \State Measure currents and voltages: $i_{\alpha\beta}$, $v_{\alpha\beta}$
    \State Compute variable: \\
    \qquad \qquad $y^* = -R_z i_{\alpha\beta} + v_{\alpha\beta}$
    \State Calculate state: \\
    \qquad \qquad $\eta(\hat{x}^*) = \hat{x}^*-L_z i_{\alpha\beta}$

    \State Update flux observer: \\
    \qquad \qquad ${{{\dot{\hat{x}}}}^{*}}={{y}^{*}}+\phi ({{{\hat{x}}}^{*}}-{{L}_{z}}{{i}_{\alpha \beta }})\left( \hat{\psi} _{z}^{2}-{{\left\| \eta ({{{\hat{x}}}^{*}}) \right\|}^{2}} \right)$
    \State Equivalent flux update law: \\ \vspace{0.2em} 
    \qquad \qquad $\dot{\hat{\psi }}_z = \hat{s}_{p1} \tanh \left(\hat{s}_{p2}(\| \eta (\hat{x}^*) \| - \hat{\psi }_z) \right)$ 
    \State Calculate $ \hat{x}^*$ and $\hat{\psi}_z$  \vspace{0.1em}
    \State Estimate $\hat{\theta}^*$  \\ \vspace{0.1em}
    \qquad \qquad $\hat{\theta}^* = \tanh^{-1} \left(({\hat{x}^*_2 - L_z i_\beta})/({\hat{x}^*_1 - L_z i_\alpha}) \right)$
\Until{Motor stop}
\end{algorithmic}
\end{algorithm}

\section{Stability Analysis}
This section primarily employs the direct Lyapunov method to analyze the stability of the flux observer under parameter mismatches, as well as the stability and convergence of the equivalent flux update law.

\subsection{Stability Analysis Under Nominal Parameter Conditions}
Before conducting the stability analysis under parameter mismatches, it is necessary to first establish the system's stability under the nominal parameter conditions.

\textcolor{black}{A Lyapunov function is constructed as:}
    \begin{equation} \textcolor{black}{
        V_1 = \frac{1}{2} \|\tilde{x}\|^2 = \frac{1}{2} (\tilde{x}_1^2 + \tilde{x}_2^2)
    \label{eq39}}
    \end{equation}

\begin{theorem} \label{Theorem 1} \textcolor{black}{
    For the SPMSM system described by \eqref{eq1}, \eqref{eq2} with the control strategy \eqref{eq4} and the update law \eqref{eq3}-\eqref{eq5} under the nominal parameters, the variable $\tilde{x}$ is uniformly bounded. The $\tilde{x}$ converges to a compact set $\Omega_1 $, defined as follows:}
\begin{equation}
    \textcolor{black}{\Omega_1 = \left\{ \tilde{x} \in \mathbb{R}^2  \mid \|\tilde{x}\| \leq 2 \psi_m \right\}
\label{eq38}}
\end{equation} 
\end{theorem}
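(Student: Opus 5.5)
We differentiate $V_1$ in \eqref{eq39} along the nominal error dynamics \eqref{eq10} and show that $\dot V_1<0$ whenever $\tilde x$ lies outside $\Omega_1$; a standard invariance argument then gives uniform boundedness and convergence to $\Omega_1$.

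\textbf{Step 1: computing $\dot V_1$.} Write $n=[\cos\theta;\sin\theta]$ and abbreviate $p=\tilde x^\top n$. By Lemma~\ref{lemma 2}, $p=Q_1$. Then $\xi_1(\tilde x)=2\psi_m p-\|\tilde x\|^2$, and \eqref{eq10} gives
\begin{equation*}
\dot V_1=\tilde x^\top\dot{\tilde x}=\phi\,\xi_1(\tilde x)\big(\|\tilde x\|^2-\psi_m p\big).
\end{equation*}
The key observation is that both factors are controlled by the same pair $(p,\|\tilde x\|^2)$. Moreover, the Cauchy--Schwarz inequality gives $|p|\le\|\tilde x\|$.

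\textbf{Step 2: sign of $\dot V_1$ outside $\Omega_1$.} Suppose $\|\tilde x\|>2\psi_m$. Then
\begin{equation*}
2\psi_m p\le 2\psi_m\|\tilde x\|<\|\tilde x\|^2,
\end{equation*}
so $\xi_1<0$. Similarly,
\begin{equation*}
\psi_m p\le\psi_m\|\tilde x\|<\tfrac12\|\tilde x\|^2,
\end{equation*}
so $\|\tilde x\|^2-\psi_m p>\tfrac12\|\tilde x\|^2>0$. Since $\phi>0$, the product is strictly negative, i.e. $\dot V_1<0$.

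We can make this quantitative. From $\xi_1\le -\|\tilde x\|(\|\tilde x\|-2\psi_m)$ we obtain
\begin{equation*}
\dot V_1\le-\tfrac{\phi}{2}\,\|\tilde x\|^3\big(\|\tilde x\|-2\psi_m\big)<0.
\end{equation*}
This is strictly negative on $\{\|\tilde x\|\ge 2\psi_m+\epsilon\}$ for every $\epsilon>0$, and is bounded away from zero on compact annuli there.

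\textbf{Step 3: boundedness and convergence.} Since $V_1$ is nonincreasing whenever $\|\tilde x\|>2\psi_m$, we have
\begin{equation*}
\|\tilde x(t)\|\le\max\{\|\tilde x(0)\|,\,2\psi_m\}
\end{equation*}
for all $t$, which is uniform boundedness. The sublevel set $\{V_1\le 2\psi_m^2\}=\Omega_1$ is forward invariant. Take any trajectory starting outside $\Omega_1$ and fix $\epsilon>0$. On $\{\|\tilde x\|\ge2\psi_m+\epsilon\}$, $\dot V_1$ is bounded above by a negative constant, so $V_1$ must reach the level $(2\psi_m+\epsilon)^2/2$ in finite time. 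Hence
\begin{equation*}
\limsup_{t\to\infty}\|\tilde x(t)\|\le 2\psi_m,
\end{equation*}
which is the stated convergence to $\Omega_1$ (a uniformly ultimately bounded result).

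\textbf{Main obstacle.} The delicate point is that $\dot V_1$ is \emph{not} sign-definite inside $\Omega_1$. For example, where $\xi_1>0$ and $\|\tilde x\|^2>\psi_m p$, the derivative is positive. So one cannot claim asymptotic convergence to zero; the ball of radius $2\psi_m$ is exactly what the factorization in Step 1 yields.

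The only care needed is to confirm that trajectories never leave the ball once inside. This holds because $\dot V_1\le0$ on its boundary $\|\tilde x\|=2\psi_m$: there $\xi_1\le0$ and $\|\tilde x\|^2-\psi_m p\ge 2\psi_m^2>0$.

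No time-variation issue arises from $\theta(t)$, because all bounds depend only on $\|\tilde x\|$ and not on $\theta$.
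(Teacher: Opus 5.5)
Your proof is correct, but it differs from the paper's at the key step. Both arguments start from the same factorization $\dot V_1=\phi(2\psi_m Q_1-\|\tilde x\|^2)(\|\tilde x\|^2-\psi_m Q_1)$.

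The paper treats this as a downward-opening quadratic in $\|\tilde x\|^2$ with roots $\psi_m Q_1$ and $2\psi_m Q_1$. It notes the discriminant $4\phi^2\psi_m^2Q_1^2>0$, asserts that $\|\tilde x\|^2$ settles at the larger root, and then bounds $\sqrt{2\psi_m Q_1}\le 2\psi_m$ using the closed form $Q_1=\psi_m(1-\cos\tilde\theta)$ from Lemma~\ref{lemma 2}.

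You instead use Cauchy--Schwarz, $|Q_1|\le\|\tilde x\|$, to make both factors sign-definite outside the ball. This gives the explicit estimate $\dot V_1\le-\tfrac{\phi}{2}\|\tilde x\|^3(\|\tilde x\|-2\psi_m)$ and a standard ultimate-boundedness argument.

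The paper's route gives a finer, operating-point-dependent description, since the settling value is tied to $\tilde\theta$. It also sets up the root-comparison template reused for the mismatch cases in Theorem~\ref{Theorem 2}. However, it treats $Q_1$ as a fixed coefficient, even though $Q_1=\tilde x^\top n$ depends on $\tilde x$ and on $\theta(t)$. Moreover, the closed form for $Q_1$ presupposes $\|\eta(\hat x)\|=\psi_m$, which holds only on the constraint circle. So its ``converges to the larger root'' step is heuristic.

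Your version needs neither assumption and is uniform in $\theta(t)$. It delivers exactly what the theorem claims: the uniform bound $\max\{\|\tilde x(0)\|,2\psi_m\}$ and $\limsup_{t\to\infty}\|\tilde x\|\le 2\psi_m$. Cauchy--Schwarz even recovers the paper's constant, since $\|\tilde x\|^2=2\psi_m Q_1\le 2\psi_m\|\tilde x\|$ forces $\|\tilde x\|\le 2\psi_m$.

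One small point you should state explicitly: your invariance argument for $\Omega_1$ applies verbatim to every sublevel set $\{V_1\le c\}$ with $c\ge 2\psi_m^2$. That is what keeps the trajectory below level $(2\psi_m+\epsilon)^2/2$ once it first reaches it.
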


\begin{proof} \textcolor{black}{
    The proof of Theorem \ref{Theorem 1} is provided in Appendix \ref{proof_1}. }
\end{proof}

\subsection{Stability Analysis Under Parameter Mismatch}

This subsection analyzes the stability of the flux observer under parameter mismatches and discusses the effects of resistance, inductance, and flux mismatches on system stability. \textcolor{black}{A new Lyapunov function is constructed as:}
    \begin{equation}\textcolor{black}{
        V_2 = \frac{1}{2} \|\tilde{x}^*\|^2 = \frac{1}{2}(\tilde{x}_1^{*2} + \tilde{x}_2^{*2})
    \label{eq45}}
    \end{equation}

\begin{theorem} \label{Theorem 2} \textcolor{black}{
    For the SPMSM system described by \eqref{eq1} and \eqref{eq2} with the control strategy \eqref{eq12} under parameter mismatches, the stability of the flux observer is governed by parameter accuracy. Excessive deviations in flux, inductance, or resistance will violate the Lyapunov stability criteria, leading to increased steady-state errors or observer divergence. }
\end{theorem}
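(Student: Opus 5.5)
We plan to differentiate the Lyapunov function $V_2$ in \eqref{eq45} along the mismatched error dynamics \eqref{eq17}. This gives
\begin{equation*}
\dot V_2 = \tilde{x}^{*\top}\dot{\tilde{x}}^* = \phi(\xi_1^*+\xi_2^*)\Big(\|\tilde{x}^*\|^2 - \psi_m Q_1^* - \Delta L_s Q_2^*\Big) - \Delta R_s Q_2^*,
\end{equation*}
where $Q_1^*$ and $Q_2^*$ are the projections introduced in Lemma \ref{lemma 2}. We then substitute the closed forms of Lemma \ref{lemma 2} for $Q_1^*$ and $Q_2^*$. Together with \eqref{eq18} and \eqref{eq19}, this expresses $\dot V_2$ as the nominal part plus explicit perturbation terms. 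The nominal part is the quantity already handled in the proof of Theorem \ref{Theorem 1}. The perturbation terms are linear or quadratic in $\Delta R_s$, $\Delta L_s$, $\Delta\psi_m$, and the currents.

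As a reference point, we first set $\Delta R_s=\Delta L_s=0$ and $\psi_z=\psi_m$. Then $\xi_2^*=0$, and $\dot V_2$ reduces to $\phi\xi_1^*(\|\tilde{x}^*\|^2-\psi_m Q_1^*)$. By the argument of Theorem \ref{Theorem 1}, this is negative outside $\Omega_1$ in \eqref{eq38}, and it vanishes at $\tilde{x}^*=0$.

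We would then bring back each mismatch separately.

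\emph{Resistance.} The term $-\Delta R_s Q_2^*$ is not sign-definite, since it depends on $i_d$ and $\hat{i}_d^*$. It can be bounded by $|\Delta R_s|\,\|i_{\alpha\beta}\|\,\|\tilde{x}^*\|$. Comparing this with the negative nominal part shows that $\dot V_2<0$ only outside a ball whose radius grows with $|\Delta R_s|\,\|i_{\alpha\beta}\|/\phi$. Hence the steady-state error grows, and for large enough $\Delta R_s$ the negative-definite region vanishes.

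\emph{Flux.} With $\Delta L_s=\Delta R_s=0$, we get $\xi_2^*=\psi_z^2-\psi_m^2$. This shifts the equilibrium of $\xi_1^*+\xi_2^*=0$ away from $\tilde{x}^*=0$, so $\tilde{x}^*$ settles on a nonzero set of radius of order $|\psi_z-\psi_m|$. Via $Q_1^*$, this appears as the angle error $\tilde{\theta}^*$.

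\emph{Inductance.} $\Delta L_s$ enters $\xi_2^*$ through $\Delta L_s^2\|i_{\alpha\beta}\|^2-2\Delta L_s\psi_z\hat{i}_d^*$. It also enters the bracket through $-\Delta L_s Q_2^*$, which contains $\Delta L_s^2\|i_{\alpha\beta}\|^2$. The resulting indefinite terms scale with current.

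Finally, we would collect these bounds into a sufficient condition for $\dot V_2\le 0$ outside a compact set, of the form
\begin{equation*}
\|\tilde{x}^*\| \ge c_0\psi_m + c_1|\Delta\psi_m| + c_2|\Delta L_s|\,\|i_{\alpha\beta}\| + c_3\frac{|\Delta R_s|\,\|i_{\alpha\beta}\|}{\phi\,\psi_m^2}.
\end{equation*}
The stated conclusion follows from this condition. The radius of the attracting set, and hence the steady-state error, increases with each deviation. When the deviations are large enough that the sign of the leading cubic term cannot dominate, for instance when $\xi_1^*+\xi_2^*$ changes sign on a large region or $\Delta R_s$ terms dominate at high current, the Lyapunov decrease fails and divergence cannot be excluded.

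The main obstacle is the sign-indefinite cross terms between $\xi_2^*$ and the bracket. Both depend on $\tilde{\theta}^*$ through $\hat{i}_d^*$, so no clean global bound exists. We would first try Young's inequality to split these products, together with bounding $|\hat{i}_d^*|,|i_d|\le\|i_{\alpha\beta}\|$. We accept that the result is a qualitative, sufficient-condition statement rather than a sharp threshold.
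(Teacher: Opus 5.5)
Your expression for $\dot V_2$ matches the paper's \eqref{eq47}, and the paper also treats the three mismatches one at a time. The gap is in how you then handle $\dot V_2$. Every estimate you propose (Cauchy--Schwarz on $\Delta R_s Q_2^*$, Young's inequality on the cross terms, $|i_d|,|\hat{i}_d^*|\le\|i_{\alpha\beta}\|$) is an upper bound on $\dot V_2$. Such bounds certify where the Lyapunov decrease holds, never where it fails. An ultimate-bound radius that grows with $|\Delta R_s|$, $|\Delta L_s|$, $|\Delta\psi_m|$ does not show that the actual steady-state error grows, so ``hence the steady-state error increases'' is a non sequitur.

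Your own bounds in fact exclude the divergence branch. Write $\dot V_2=\phi AB-\Delta R_s Q_2^*$ with $A=\|\tilde{x}^*\|^2-\psi_m Q_1^*-\Delta L_s Q_2^*$ and $B=\xi_1^*+\xi_2^*$. Then $A\ge\|\tilde{x}^*\|^2-(\psi_m+|\Delta L_s|\|i_{\alpha\beta}\|)\|\tilde{x}^*\|$ and $B\le-\|\tilde{x}^*\|^2+2\psi_m\|\tilde{x}^*\|+|\xi_2^*|$. By \eqref{eq19}, $|\xi_2^*|$ is bounded independently of $\tilde{x}^*$. So $\phi AB$ is eventually bounded above by a negative quartic in $\|\tilde{x}^*\|$, which dominates $|\Delta R_s|\|i_{\alpha\beta}\|\|\tilde{x}^*\|$ for every bounded mismatch. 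The region where $\dot V_2<0$ therefore never vanishes, contrary to your resistance paragraph; it only moves outward. Your method thus yields ultimate boundedness for all mismatches, not divergence. Moreover, $\Omega_1$ already has radius $2\psi_m$, which admits any angle error up to $\pi$, so enlarging it says nothing about position accuracy.

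The paper gets two-sided information instead. In each single-mismatch case it factors $\dot V_2$ exactly as a quadratic in $\|\tilde{x}^*\|^2$, treating $Q_1^*$, $i_d$, $\hat{i}_d^*$ as operating-point quantities; see \eqref{eq101}, \eqref{eq102}, \eqref{eq105}. This gives $\dot V_2>0$ strictly between the two roots, and $\|\tilde{x}^*\|^2$ is taken to settle at the larger root. That root is then studied as a function of the mismatch:
\begin{itemize}
\item $r_{\psi_{m2}}=\psi_m^2+\psi_z^2-2\psi_m\psi_z\cos\tilde{\theta}^*$ for flux;
\item $r_{L_{s2}}$, a parabola in $\Delta L_s$ with vertex $\Delta L_{s,a}$ from \eqref{eq104}, for inductance;
\item $r_{R_{s2}}$ in \eqref{eq106} for resistance, whose radicand can turn negative, which the paper takes as the route to divergence.
\end{itemize}

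By taking absolute values you also lose the sign structure this exposes. The vertex $\Delta L_{s,a}$ is nonzero whenever $i_d\neq\hat{i}_d^*$, and the sign of $\Delta R_s(i_d-\hat{i}_d^*)$ decides whether $r_{R_{s2}}$ grows or shrinks. So a small mismatch of the right sign reduces the error, and only \emph{excessive} deviations degrade it. Your claim that the error increases with each deviation is at odds with this.

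Your flux paragraph comes closest: the set $\xi_1^*+\xi_2^*=0$ is exactly $\|\tilde{x}^*\|^2=r_{\psi_{m2}}$, and on it $\|\tilde{x}^*\|\ge|\psi_z-\psi_m|$. What is missing is the analogous explicit root computation in the inductance and resistance cases, in place of norm bounds.
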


\begin{proof}\textcolor{black}{
    The proof of {Theorem \ref{Theorem 2}} is provided in Appendix \ref{proof_2}.}
\end{proof}

\textcolor{black}{It should be noted from the proof of Theorem \ref{Theorem 2} that parameter mismatch does not necessarily exert a detrimental effect on observer performance. Under certain operating conditions, an appropriate increase or decrease in the nominal values of resistance and inductance can reduce the convergence error $\|\tilde{x}^*\|$. This compensatory effect is intrinsically linked to the system’s operating point, specifically the load conditions, the stator current $i_{\alpha\beta}$, and the sign and magnitude of the $d$-axis current deviation $(i_d - \hat{i}_d^*)$.}

\subsection{Stability Analysis of the Equivalent Flux Update Law}
This part analyzes the stability of the equivalent flux update law. First, define the flux error variable $\tilde{\psi}_z = \hat{\psi}_z - \|\eta(\hat{x}^*)\|$. \textcolor{black}{A new Lyapunov function is constructed as:}
    \begin{equation}\textcolor{black}{
        V_3 = \frac{1}{2} \tilde{\psi }_z^2
    \label{eq59}}
    \end{equation}

\begin{theorem} \label{Theorem 3}\textcolor{black}{
    For the SPMSM system described by \(\eqref{eq1}\), \(\eqref{eq2}\) with the control strategy \(\eqref{eq12}\) and equivalent flux update law \eqref{eq37} under parameter mismatches, the auxiliary variable $\tilde{\psi}_z$ is uniformly bounded. The $\tilde{\psi}_z$ converges to a compact set $\Omega_2 $, defined as follows:}
\begin{equation}\textcolor{black}{
    \Omega_2 = \left\{ \tilde{\psi}_z \in \mathbb{R}  \mid |\tilde{\psi}_z| \leq \frac{\Psi}{\hat{s}_{p2} (\hat{s}_{p1}- \Psi)} \right\}
\label{eq58} }
\end{equation}
\textcolor{black}{ where $\Psi = \sup \|\dot \eta(\hat{x}^*)\|$ is bounded.}
\end{theorem}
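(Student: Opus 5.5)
The plan is a direct Lyapunov argument on $V_3$ in \eqref{eq59}. I will bound the drift term coming from the time variation of $\|\eta(\hat{x}^*)\|$ by $\Psi$, and use Lemma~\ref{lemma 1} to turn the saturated update law \eqref{eq37} into a lower bound that is linear-fractional in $|\tilde{\psi}_z|$.

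\textbf{Step 1 (error dynamics).} Differentiate $\tilde{\psi}_z=\hat{\psi}_z-\|\eta(\hat{x}^*)\|$ along \eqref{eq37}. Since $\|\eta(\hat{x}^*)\|-\hat{\psi}_z=-\tilde{\psi}_z$ and $\tanh$ is odd, this gives
\begin{equation*}
\dot{\tilde{\psi}}_z=-\hat{s}_{p1}\tanh(\hat{s}_{p2}\tilde{\psi}_z)-\frac{d}{dt}\|\eta(\hat{x}^*)\|.
\end{equation*}
Whenever $\eta(\hat{x}^*)\neq 0$ we have $\frac{d}{dt}\|\eta\|=\eta^\top\dot\eta/\|\eta\|$. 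By Cauchy--Schwarz, $\bigl|\frac{d}{dt}\|\eta(\hat{x}^*)\|\bigr|\le\|\dot\eta(\hat{x}^*)\|\le\Psi$.

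The assumption $\Psi<\infty$ has to be justified. I will argue it from the bounded observer error established in Theorem~\ref{Theorem 1} and Theorem~\ref{Theorem 2}, together with bounded $i_{\alpha\beta}$, $v_{\alpha\beta}$ and $\dot i_{\alpha\beta}$, since these make the right-hand side of \eqref{eq12} bounded.

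\textbf{Step 2 (Lyapunov derivative).} Compute
\begin{equation*}
\dot V_3=\tilde{\psi}_z\dot{\tilde{\psi}}_z\le-\hat{s}_{p1}|\tilde{\psi}_z|\,|\tanh(\hat{s}_{p2}\tilde{\psi}_z)|+\Psi|\tilde{\psi}_z|,
\end{equation*}
using that $\tilde{\psi}_z\tanh(\hat{s}_{p2}\tilde{\psi}_z)=|\tilde{\psi}_z||\tanh(\hat{s}_{p2}\tilde{\psi}_z)|$. Applying Lemma~\ref{lemma 1} with $a=\hat{s}_{p2}\tilde{\psi}_z$ yields
\begin{equation*}
\dot V_3\le-|\tilde{\psi}_z|\left(\frac{\hat{s}_{p1}\hat{s}_{p2}|\tilde{\psi}_z|}{1+\hat{s}_{p2}|\tilde{\psi}_z|}-\Psi\right).
\end{equation*}

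\textbf{Step 3 (ultimate bound).} The bracket is positive exactly when $\hat{s}_{p2}|\tilde{\psi}_z|(\hat{s}_{p1}-\Psi)>\Psi$. Under the gain condition $\hat{s}_{p1}>\Psi$, which is implicit in \eqref{eq58}, this is equivalent to
\begin{equation*}
|\tilde{\psi}_z|>\frac{\Psi}{\hat{s}_{p2}(\hat{s}_{p1}-\Psi)}.
\end{equation*}
Hence $\dot V_3<0$ strictly outside $\Omega_2$. A standard ultimate-boundedness argument then applies: $V_3$ is radially unbounded and decreases outside the sublevel set containing $\Omega_2$. This gives uniform boundedness of $\tilde{\psi}_z$ and convergence to $\Omega_2$ in finite time, with $\Omega_2$ forward invariant.

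I will also remark on how the bound depends on the gains. Increasing $\hat{s}_{p2}$ shrinks $\Omega_2$ inversely, and increasing $\hat{s}_{p1}$ both shrinks $\Omega_2$ and enlarges the decay margin. This matches the claim after \eqref{eq37}.

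\textbf{Main obstacle.} I expect the hard part to be the rigor of Step 1, not the algebra. First, $\|\eta(\hat{x}^*)\|$ is not differentiable where $\eta(\hat{x}^*)=0$. I would handle this by noting that $\|\eta(\hat{x}^*)\|$ stays near $\hat{\psi}_z>0$, by \eqref{eq26}--\eqref{eq28} and the practical positivity of the flux. Failing that, I would use the Dini derivative, since $\|\cdot\|$ is globally Lipschitz and the bound $\le\Psi$ still holds. Second, finiteness of $\Psi$ must be stated as a standing assumption on bounded trajectories of the coupled observer. The requirement $\hat{s}_{p1}>\Psi$ should be made explicit: without it, the saturated law cannot out-run the variation of $\|\eta\|$ and no bound of the form \eqref{eq58} holds.
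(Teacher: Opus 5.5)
Your proposal is correct and follows the paper's proof essentially step for step: the same error dynamics for $\tilde{\psi}_z$, the bound $\Psi$ on the drift of $\|\eta(\hat{x}^*)\|$, Lemma 1 applied to $\tanh(\hat{s}_{p2}\tilde{\psi}_z)$, and the same threshold $\Psi/(\hat{s}_{p2}(\hat{s}_{p1}-\Psi))$. The one difference is the last step: you conclude from the sign of $\dot V_3$ outside $\Omega_2$, whereas the paper uses a comparison argument on $\gamma=|\tilde{\psi}_z|$; your handling of $\frac{d}{dt}\|\eta(\hat{x}^*)\|$ (which the paper writes as $\|\dot\eta(\hat{x}^*)\|$) and your explicit gain condition $\hat{s}_{p1}>\Psi$ are more careful than the paper's.

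Two minor points. First, the Lemma 1 bound only gives $\dot V_3\le 0$ on $\partial\Omega_2$, so finite-time entry into $\Omega_2$ itself needs the strictness of Lemma 1 for $a\neq 0$; without it you get finite-time entry into every larger neighborhood, which still gives the convergence the theorem asserts. Second, take finiteness of $\Psi$ as the hypothesis the statement makes rather than deriving it from Theorem 2, which does not guarantee boundedness under mismatch.
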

\begin{proof}
   \textcolor{black}{ The proof of {Theorem \ref{Theorem 3}} is provided in Appendix \ref{proof_3}.}
\end{proof}

Since \(|\tilde{\psi}_z| > 0\), from \eqref{eq58}, \(\hat{s}_{p1}\) should be chosen with a high gain to satisfy \(\hat{s}_{p1} - \Psi > 0\). \textcolor{black}{In practice, the exact value of $\Psi$ is unnecessary, as an appropriate gain $\hat{s}_{p1}$ can be easily obtained through simple parameter tuning.}

\begin{figure}[!t]
    \centering
        \includegraphics[width=0.65\linewidth]{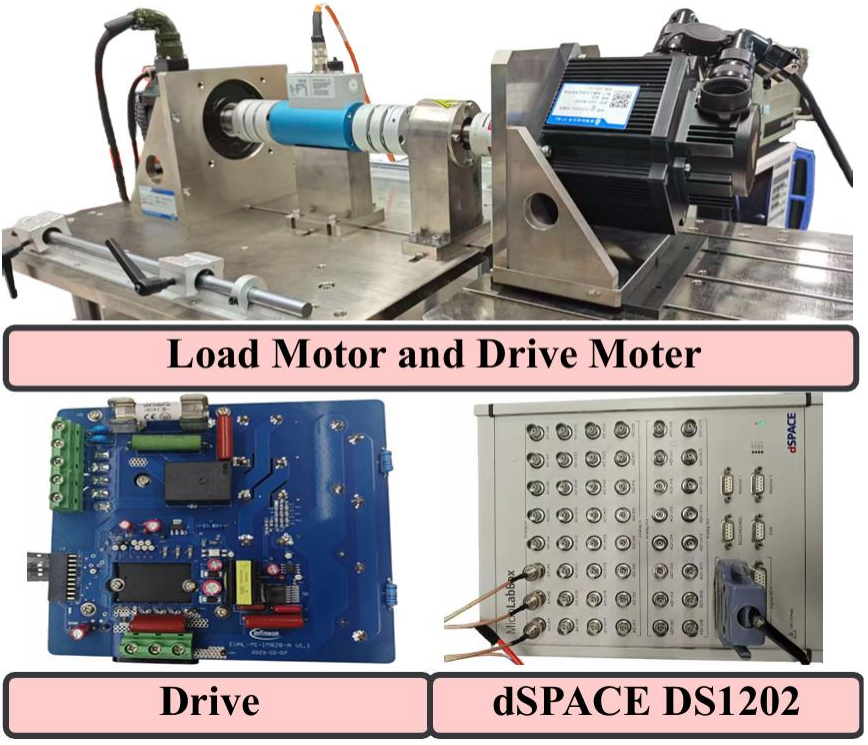} 
    \caption{ Experimental platform of SPMSM.} 
    \label{Motor Platform}
\end{figure}

\begin{table}[!t]
  \caption{Parameters of the SPMSM System}
  \renewcommand\arraystretch{1.25} 
  \small
  \centering
  \setlength{\tabcolsep}{5mm} 
  \begin{tabular}{c c c } \hline \hline
    Description & Parameter & Value   \\ \hline 
    Rated power & $P_r$ & 1 kW \\
    External torque & $T_e$ & 4 N $\!\cdot\!$ m \\
    Number of pole pairs & $N_p$ & 4 poles  \\
    DC voltage & $U_{dc}$ & 300 V \\ \hline
    Nominal inductance & $L_z$ & 3.21 mH  \\
    Nominal stator resistance & $R_z$ & 1.38 $\Omega$ \\
    Nominal flux linkage & $\psi_z $ & 0.148 Wb   \\ \hline
    Observer gain & $\phi$ & 8000  \\
    Update law gains & $(\hat{s}_{p1}, \hat{s}_{p2})$ & (10,10) \\
    \hline  \hline 
  \end{tabular}
  \label{Table.Parameters}
\end{table}

\section{Experimental Results and Analysis}

To validate the proposed equivalent flux update law, a sensorless SPMSM experimental platform is built, as shown in Fig. \ref{Motor Platform}. It consists of two PMSMs, a dSPACE DS1202 system, a driver module, and a host computer. The speed loop runs at 2 kHz, and both the current loop and PWM switching frequency are 10 kHz. Key SPMSM parameters are listed in Table \ref{Table.Parameters}.

\begin{figure}[!t]
    \centering
    \setlength{\subfigbottomskip}{-2pt} 
    \setlength{\subfiglabelskip}{-10pt} 
    \setlength{\subfigcapskip}{-6pt}   
    \subfigure[]{
        \includegraphics[width=0.48\linewidth]{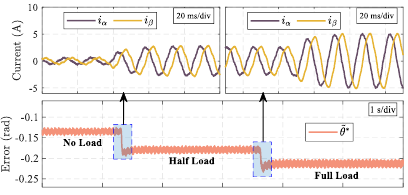}
    }
     \hspace{-0.05\linewidth} 
     \subfigure[]{
        \includegraphics[width=0.48\linewidth]{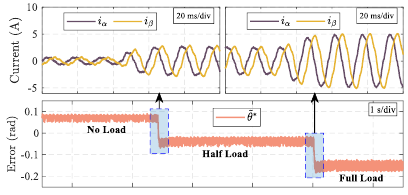}
    }
    \subfigure[]{
        \includegraphics[width=0.48\linewidth]{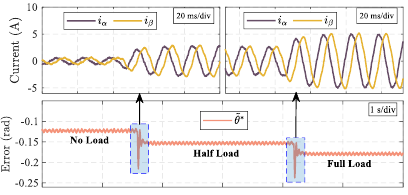}
    }
     \hspace{-0.05\linewidth} 
    \subfigure[]{
        \includegraphics[width=0.48\linewidth]{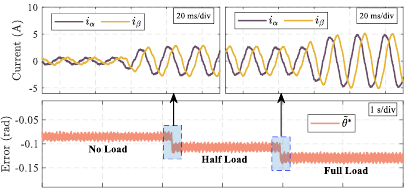}
    }
    \caption{\textcolor{black}{Experimental results under no-load, half-load, and full-load conditions at 1000 rpm. (a) Method 1. (b) Method 2. (c) Method 3. (d) Proposed EFC.}}
    \label{half_load}
\end{figure}

\subsection{Experimental Setup}

\textcolor{black}{To evaluate the control performance under parameter uncertainties, comparative experiments are conducted by introducing $\pm$50\% step variations and sinusoidal mismatches to the resistance, inductance, and flux. Three representative controllers are compared: Method 1, the nonlinear flux observer \cite{lee2009sensorless}; Method 2, an EKF-based controller \cite{10949742}; and Method 3, a linear flux observer-based controller \cite{zhou2023robust_1}.}

\textcolor{black}{The control framework relies on three independent tuning parameters, namely one observer gain $\phi$ and two update law gains $(\hat{s}_{p1},\hat{s}_{p2})$. These gains are tuned by first optimizing the observer gain and then the update law gains. Each parameter is determined through a systematic iterative process, progressing from coarse to fine tuning, and the finalized values are summarized in Table \ref{Table.Parameters}.} For position error visualization, the $\pm 2\pi$ phase-wrapping outliers are filtered and replaced with the preceding data to ensure clarity without affecting the analysis.

\begin{table}[!t] \color{black}
  \caption{RMSE of Estimation Errors under Different Parameter Mismatch Conditions. Unit: rad}
  \renewcommand\arraystretch{1.25} 
  \small
  \centering
  \setlength{\tabcolsep}{2mm} 
  \begin{tabular}{c c c c c c} \hline \hline
     & Method 1 & Method 2 & Method 3 & Proposed EFC   \\ \hline 
    $R_z$ & $0.215$ & $0.174$ & $0.17$9 & $ \underline{0.129}$ \\
    $0.5R_z$ & $0.279$ & $0.207$ &$ 0.197$ & $\underline{0.128}$ \\
    $1.5R_z$ & $0.157$ & $0.145$ & $0.163$ & $\underline{0.130}$ \\ \hline
    $L_z$ & $0.215$ & $0.174$ &$ 0.179$ & $\underline{0.129}$ \\
    $0.5L_z$  & $0.261$ & $\underline{0.131}$ & $0.233$ & $0.181$  \\
    $1.5L_z$  & $0.172$ & $0.219$ & $0.128$ & $\underline{0.078}$   \\ \hline
    $\psi_z$  & $0.215$ & $0.174$ & $0.179$ & $\underline{0.129}$   \\
    $0.5\psi_z$ & $0.532$ & $0.684$ & $0.332$ & $\underline{0.129}$  \\
    $1.5\psi_z$ & $0.222$ & $0.096$ & $\underline{0.023}$ & $0.130$  \\
    \hline  \hline 
  \end{tabular}
  \label{Table.RMSE}
\end{table}

\begin{figure}[!t]
    \centering
    \setlength{\subfigbottomskip}{-2pt} %
    \setlength{\subfiglabelskip}{-10pt} %
    \setlength{\subfigcapskip}{-6pt}   %
    \subfigure[]{
        \includegraphics[width=0.48\linewidth]{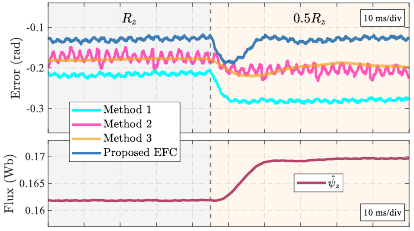}
    }
     \hspace{-0.05\linewidth} 
    \subfigure[]{
        \includegraphics[width=0.48\linewidth]{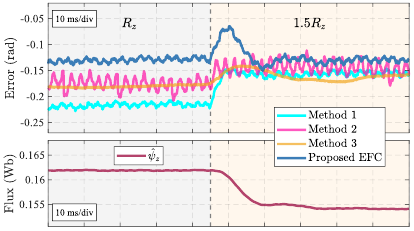}
    }
    \caption{\textcolor{black}{Experimental results under resistance mismatch at 1000 rpm and full load, showing the flux $\hat{\psi}_z$ update in the proposed EFC. (a) $R_z \rightarrow 0.5R_z$. (b) $R_z \rightarrow 1.5R_z$.}}
    \label{R_0.5R_1.5R}
\end{figure}

\begin{figure}[!t]
    \centering
    \setlength{\subfigbottomskip}{-2pt} %
    \setlength{\subfiglabelskip}{-10pt} %
    \setlength{\subfigcapskip}{-6pt}   %
    \subfigure[]{
        \includegraphics[width=0.48\linewidth]{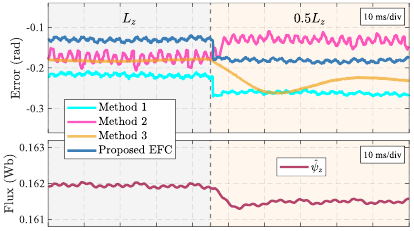}
    }
     \hspace{-0.05\linewidth} 
    \subfigure[]{
        \includegraphics[width=0.48\linewidth]{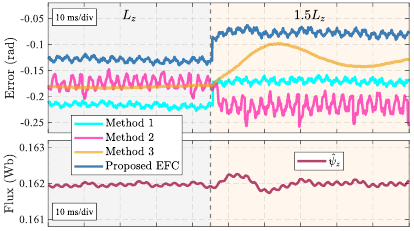}
    }
    \caption{\textcolor{black}{Experimental results under inductance mismatch at 1000 rpm and full load, showing the flux $\hat{\psi}_z$ update in the proposed EFC. (a) $L_z \rightarrow 0.5L_z$. (b) $L_z \rightarrow 1.5L_z$.}}
    \label{L_0.5L_1.5L}
\end{figure}

\begin{figure}[!t]
    \centering
    \setlength{\subfigbottomskip}{-2pt} %
    \setlength{\subfiglabelskip}{-10pt} %
    \setlength{\subfigcapskip}{-6pt}   %
    \subfigure[]{
        \includegraphics[width=0.48\linewidth]{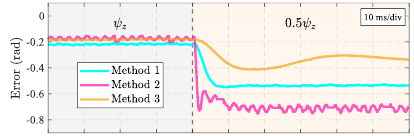}
    }
     \hspace{-0.05\linewidth} 
    \subfigure[]{
        \includegraphics[width=0.48\linewidth]{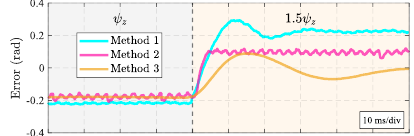}
    }
    \subfigure[]{
        \includegraphics[width=0.48\linewidth]{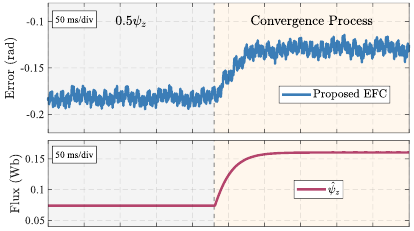}
    }
    \hspace{-0.05\linewidth} 
    \subfigure[]{
        \includegraphics[width=0.48\linewidth]{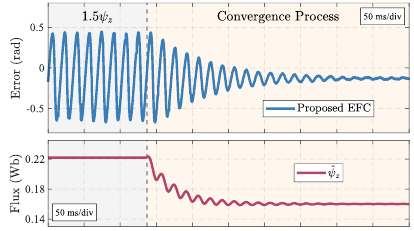}
    }
    \caption{\textcolor{black}{Experimental results under flux mismatch at 1000 rpm and full load. (a) $\psi_z \rightarrow 0.5\psi_z$. (b) $\psi_z \rightarrow 1.5\psi_z$. (c) Update of $\hat{\psi}_z$ in proposed EFC with initial $0.5\psi_z$. (d) Update of $\hat{\psi}_z$ in proposed EFC with initial $1.5\psi_z$.}}
    \label{psi_0.5psi_1.5psi}
\end{figure}

\subsection{Performance Under Variable Load}

\textcolor{black}{To evaluate the steady-state accuracy and robustness of the algorithm under different loading conditions, comparative experiments are conducted at 1000 rpm with transitions among no-load, half-load, and full-load operation. The experimental results are shown in Fig. \ref{half_load}.}

\textcolor{black}{Under no-load and half-load conditions, Method 2 and the proposed EFC demonstrate superior accuracy, bounding the position error within $\pm0.11$ rad. As the load increases to full-load, the proposed EFC significantly outperforms the other methods, achieving the lowest estimation error of approximately $-0.13$ rad, compared to $-0.22$ rad, $-0.17$ rad, and $-0.18$ rad for Methods 1, 2, and 3, respectively. Although estimation errors rise with loading for all methods, the proposed EFC consistently maintains a relatively low steady-state error throughout the loading transitions, validating its robustness against load torque disturbances.}

\subsection{Performance Under Step Parameter Mismatch}

\textcolor{black}{To evaluate the robustness of the proposed method against parameter mismatch, comparative experiments are performed at 1000 rpm under full-load conditions, focusing on step variations in resistance, inductance, and flux.}

\textcolor{black}{Fig. \ref{R_0.5R_1.5R} shows the results under resistance mismatch. At $0.5R_z$, Method 1 exhibits the largest error of approximately $-0.28$ rad, while at $1.5R_z$, Method 2 achieves a smaller error of about $-0.15$ rad. In contrast, the proposed EFC maintains the smallest error near $-0.13$ rad, and recovers within 20 ms in both cases, showing effective resistance mismatch compensation.}

\textcolor{black}{The influence of inductance variation is presented in Fig. \ref{L_0.5L_1.5L}. For $0.5L_z$, Method 1 produces the largest error, reaching approximately $-0.26$ rad, while Method 2 shows a smaller error. For $1.5L_z$, the proposed EFC achieves the smallest error, about $-0.07$ rad. The proposed method reduces sensitivity to inductance variation, although the improvement is less pronounced than in the resistance case.}

\textcolor{black}{Fig. \ref{psi_0.5psi_1.5psi} illustrates the results under flux variation, where the influence is more pronounced than that of resistance and inductance. At $0.5\psi_z$, Method 2 produces the largest error of about $-0.71$ rad, while Method 3 shows a smaller deviation at $1.5\psi_z$. In contrast, the proposed EFC converges to the same steady-state error of $-0.13$ rad and flux of 0.16 Wb under both conditions, although the convergence at $1.5\psi_z$ is slower.}

\textcolor{black}{Overall, the proposed method exhibits strong robustness to resistance and flux mismatch, while also providing partial mitigation of inductance sensitivity. The root mean square errors (RMSE) calculated over a 2s steady-state period for parameter mismatch are summarized in Table \ref{Table.RMSE}.}

\begin{figure}[!t]
    \centering
    \setlength{\subfigbottomskip}{-2pt} %
    \setlength{\subfiglabelskip}{-10pt} %
    \setlength{\subfigcapskip}{-6pt}   %
    \subfigure[]{
        \includegraphics[width=0.95\linewidth]{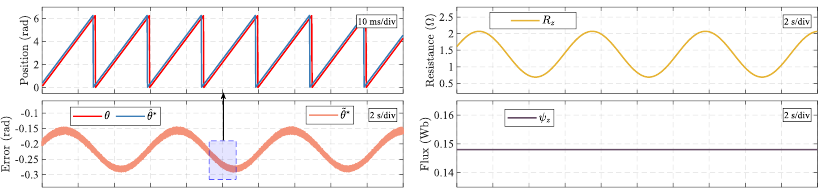}
    }
     \hspace{-0.1\linewidth} 
     \subfigure[]{
        \includegraphics[width=0.95\linewidth]{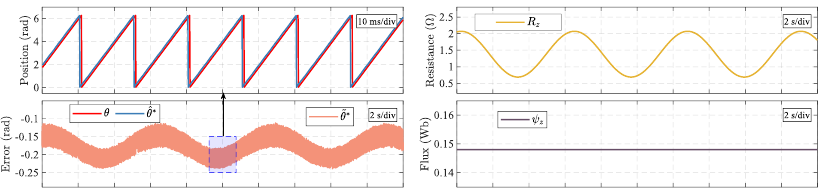}
    }
     \hspace{-0.1\linewidth} 
    \subfigure[]{
        \includegraphics[width=0.95\linewidth]{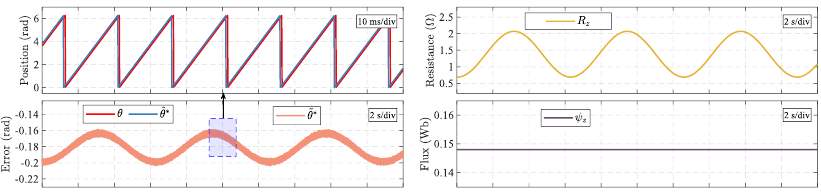}
    }
     \hspace{-0.1\linewidth} 
    \subfigure[]{
        \includegraphics[width=0.95\linewidth]{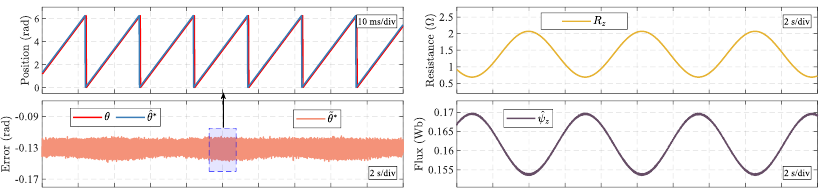}
    }
    \caption{\textcolor{black}{Experimental results under resistance mismatch $R_z + 0.5R_z\sin(2\pi t)$ at 1000 rpm and full load. (a) Method 1. (b) Method 2. (c) Method 3. (d) Proposed EFC.}}
    \label{sin_0.5R_1.5R}
\end{figure}

\begin{figure}[!t]
    \centering
    \setlength{\subfigbottomskip}{-2pt} %
    \setlength{\subfiglabelskip}{-10pt} %
    \setlength{\subfigcapskip}{-6pt}   %
    \subfigure[]{
        \includegraphics[width=0.95\linewidth]{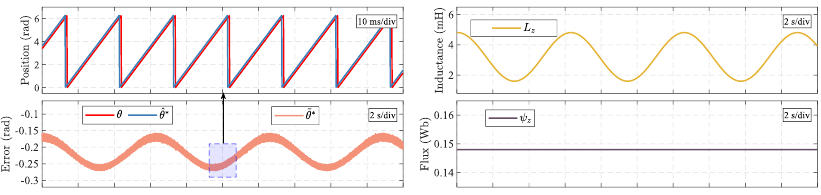}
    }
     \hspace{-0.1\linewidth} 
     \subfigure[]{
        \includegraphics[width=0.95\linewidth]{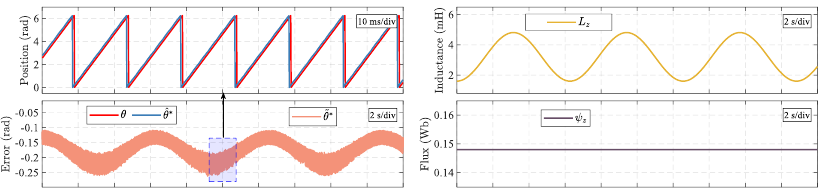}
    }
     \hspace{-0.1\linewidth}
    \subfigure[]{
        \includegraphics[width=0.95\linewidth]{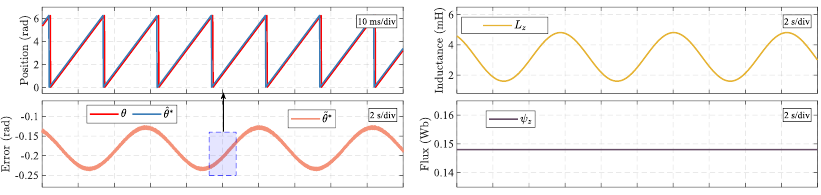}
    }
     \hspace{-0.1\linewidth} 
    \subfigure[]{
        \includegraphics[width=0.95\linewidth]{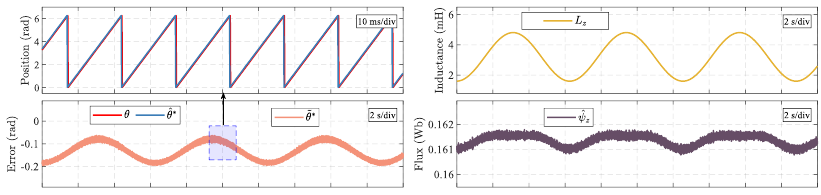}
    }
    \caption{\textcolor{black}{Experimental results under inductance mismatch $L_z + 0.5L_z\sin(2\pi t)$ at 1000 rpm and full load. (a) Method 1. (b) Method 2. (c) Method 3. (d) Proposed EFC.}}
    \label{sin_0.5L_1.5L}
\end{figure}

\begin{figure}[!t]
    \centering
    \setlength{\subfigbottomskip}{-2pt} %
    \setlength{\subfiglabelskip}{-10pt} %
    \setlength{\subfigcapskip}{-6pt}   %
    \subfigure[]{
        \includegraphics[width=0.95\linewidth]{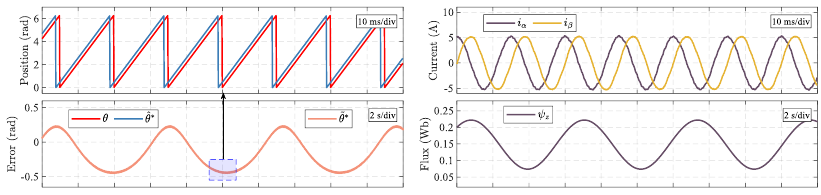}
    }
     \hspace{-0.1\linewidth} 
     \subfigure[]{
        \includegraphics[width=0.95\linewidth]{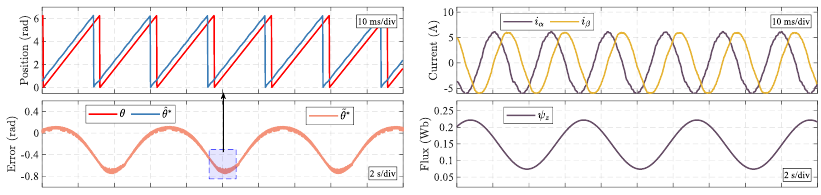}
    }
    \hspace{-0.1\linewidth} 
    \subfigure[]{
        \includegraphics[width=0.95\linewidth]{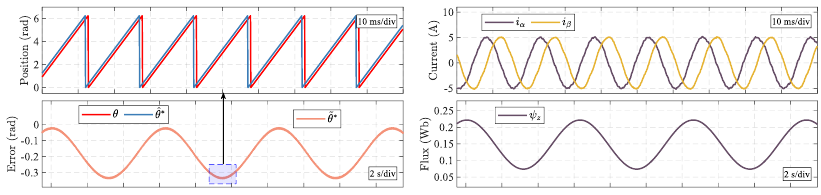}
    }
    \hspace{-0.1\linewidth} 
    \subfigure[]{
        \includegraphics[width=0.96\linewidth]{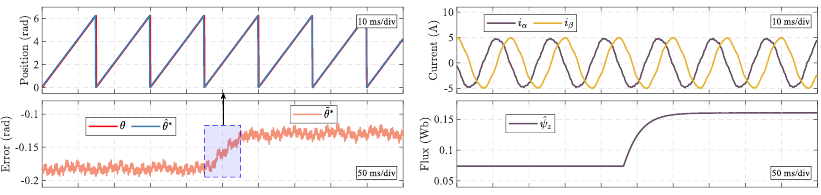}
    }
    \hspace{-0.1\linewidth} 
    \subfigure[]{
        \includegraphics[width=0.96\linewidth]{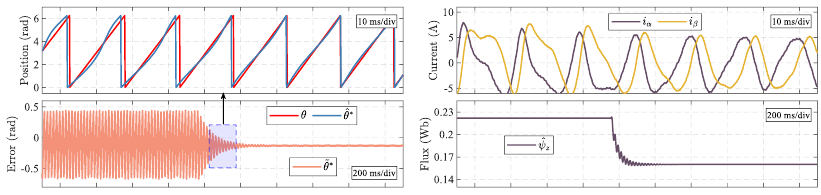}
    }
    \caption{\textcolor{black}{Experimental results under flux mismatch $\psi_z + 0.5\psi_z\sin(2\pi t)$ at 1000 rpm and full load. (a) Method 1. (b) Method 2. (c) Method 3. (d) Update of $\hat{\psi}_z$ in proposed EFC with initial $0.5\psi_z$. (e) Update of $\hat{\psi}_z$ in proposed EFC with initial $1.5\psi_z$.}}
    \label{sin_0.5psi_1.5psi}
\end{figure}

\begin{figure}[!t]
    \centering
    \setlength{\subfigbottomskip}{-2pt} %
    \setlength{\subfiglabelskip}{-10pt} %
    \setlength{\subfigcapskip}{-6pt}   %
    \subfigure[]{
        \includegraphics[width=0.48\linewidth]{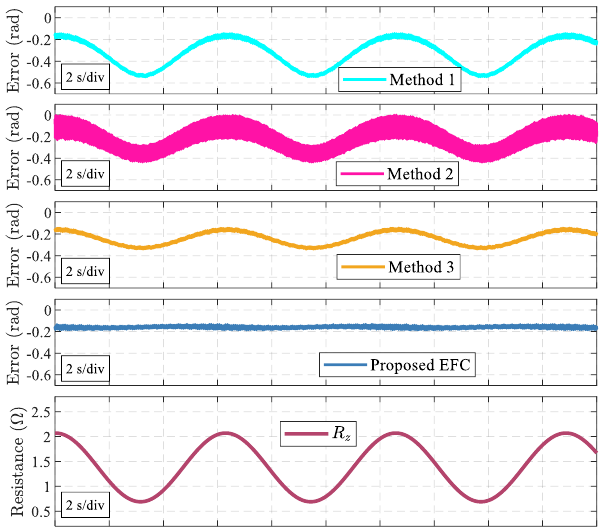}
    }
    \hspace{-0.05\linewidth}
    \subfigure[]{
        \includegraphics[width=0.48\linewidth]{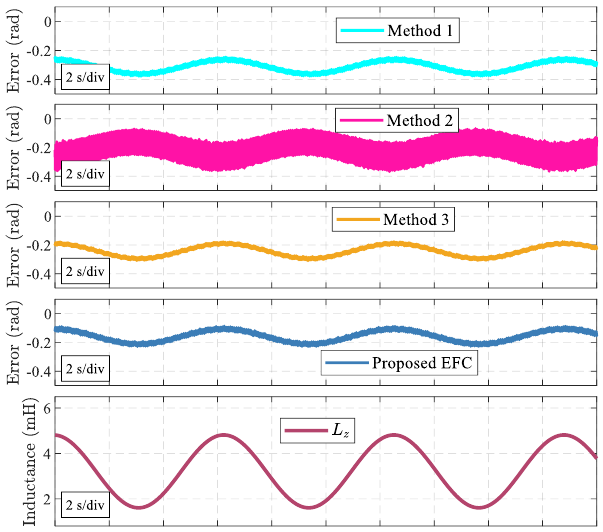}
    }
    \hspace{-0.05\linewidth}
    \subfigure[]{
        \includegraphics[width=0.48\linewidth]{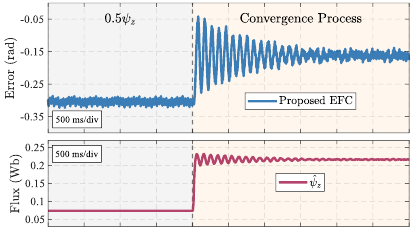}
    }
    \hspace{-0.05\linewidth}
    \subfigure[]{
        \includegraphics[width=0.48\linewidth]{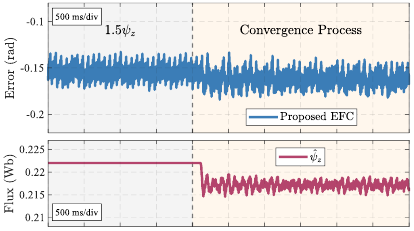}
    }
    \caption{\textcolor{black}{Experimental results under parameter mismatch at 200 rpm and full load. (a) Resistance mismatch $R_z + 0.5R_z\sin(2\pi t)$. (b) Inductance mismatch $L_z + 0.5L_z\sin(2\pi t)$. (c) Update of $\hat{\psi}_z$ in proposed EFC with initial $0.5\psi_z$. (d) Update of $\hat{\psi}_z$ in proposed EFC with initial $1.5\psi_z$.}}
    \label{200_0.5R_1.5R}
\end{figure}

\begin{figure}[!t]
    \centering
    \setlength{\subfigbottomskip}{-2pt} %
    \setlength{\subfiglabelskip}{-10pt} %
    \setlength{\subfigcapskip}{-6pt}   %
    \subfigure[]{
        \includegraphics[width=0.48\linewidth]{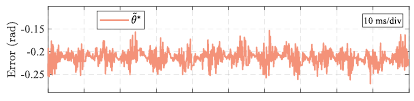}
    }
     \hspace{-0.05\linewidth} 
     \subfigure[]{
        \includegraphics[width=0.48\linewidth]{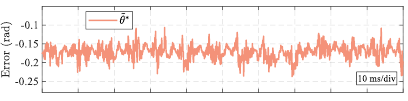}
    }
    \subfigure[]{
        \includegraphics[width=0.48\linewidth]{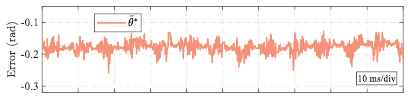}
    }
     \hspace{-0.05\linewidth} 
    \subfigure[]{
        \includegraphics[width=0.48\linewidth]{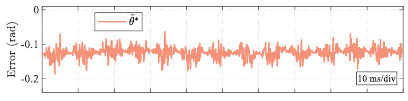}
    }
    \caption{\textcolor{black}{Experimental results under Gaussian noise $N(0, 0.5)$ at 1000 rpm and full load. (a) Method 1. (b) Method 2. (c) Method 3. (d) Proposed EFC.}}
    \label{noise}
\end{figure}

\begin{figure}[!t]
    \centering
    \setlength{\subfigbottomskip}{-2pt} %
    \setlength{\subfiglabelskip}{-10pt} %
    \setlength{\subfigcapskip}{-6pt}   %
    \subfigure[]{
        \includegraphics[width=0.48\linewidth]{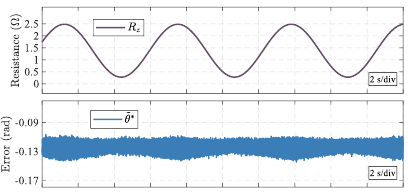}
    }
     \hspace{-0.05\linewidth} 
    \subfigure[]{
        \includegraphics[width=0.48\linewidth]{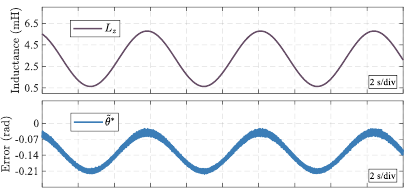}
    }
    \subfigure[]{
        \includegraphics[width=0.48\linewidth]{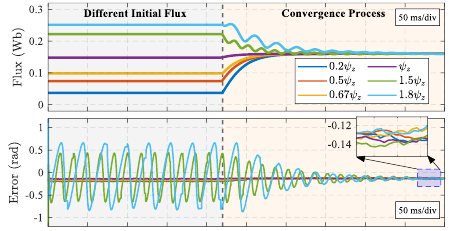}
    }
     \hspace{-0.05\linewidth} 
    \subfigure[]{
        \includegraphics[width=0.48\linewidth]{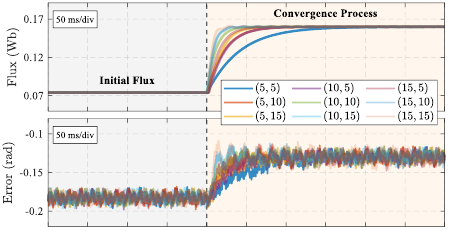}
    }
    \caption{\textcolor{black}{Experimental results of the proposed EFC under parameter mismatch and initial value sensitivity at 1000 rpm and full load. (a) Resistance mismatch $R_z + 0.8R_z\sin(2\pi t)$. (b) Inductance mismatch $L_z + 0.8L_z\sin(2\pi t)$. (c) Update of $\hat{\psi}_z$ with various initial flux. (d) Update of $\hat{\psi}_z$ under different gains.}}
    \label{0.2R_1.2R}
\end{figure}

\begin{figure}[!t]
    \centering
    \setlength{\subfigbottomskip}{-2pt} %
    \setlength{\subfiglabelskip}{-10pt} %
    \setlength{\subfigcapskip}{-6pt}   %
    \subfigure[]{
        \includegraphics[width=0.48\linewidth]{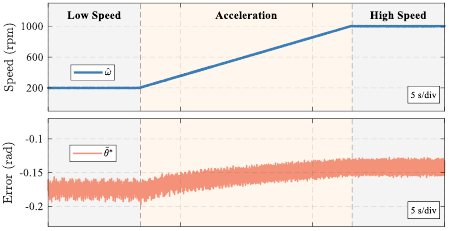}
    }
     \hspace{-0.05\linewidth} 
    \subfigure[]{
        \includegraphics[width=0.48\linewidth]{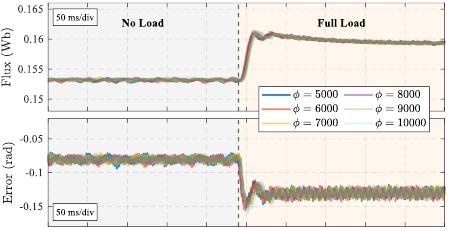}
    }
    \caption{\textcolor{black}{Experimental results of the proposed EFC under full load. (a) Acceleration process from 200 rpm to 1000 rpm. (b) Update of $\hat{\psi}_z$ at 1000 rpm with different observer gains. }}
    \label{gamma}
\end{figure}

\subsection{Performance Under Sinusoidal Parameter Mismatch}

\textcolor{black}{To further evaluate the robustness against time-varying parameter mismatches, sinusoidal variations in resistance, inductance, and flux are introduced at 1000 rpm under full-load conditions. The extreme estimation errors are consistent with those observed in the step mismatch tests.}

\textcolor{black}{Fig.~\ref{sin_0.5R_1.5R} shows the results under sinusoidal resistance variation $R_z + 0.5R_z\sin(2\pi t)$. Methods 1-3 exhibit periodic errors synchronized with the resistance change, whereas the proposed EFC maintains a nearly constant angle error around $-0.13$ rad with only small fluctuations. Fig.~\ref{sin_0.5L_1.5L} shows the results under inductance variation $L_z + 0.5L_z\sin(2\pi t)$. Methods 1, 3, and the proposed EFC show periodic errors correlated with the inductance variation, while Method 2 exhibits an inverse trend.}

\textcolor{black}{Fig.~\ref{sin_0.5psi_1.5psi} illustrates the results under sinusoidal flux variation $\psi_z + 0.5\psi_z\sin(2\pi t)$. Methods 1-3 again produce periodic errors with the flux variation. In contrast, the proposed EFC converges to the same steady-state angle error of approximately $-0.13$ rad under both $0.5\psi_z$ and $1.5\psi_z$ initial conditions, although larger transient fluctuations occur for $1.5\psi_z$.}

\subsection{Performance Under Low-Speed and Noise Immunity}

\textcolor{black}{To validate the proposed method under low-speed conditions, additional experiments are conducted at 200 rpm and full load with sinusoidal variations in resistance, inductance, and flux. As shown in Fig. \ref{200_0.5R_1.5R}, the error trends under $R_z$, $L_z$, and $\psi_z$ mismatches are consistent with those at 1000 rpm, while all methods exhibit larger estimation errors due to the reduced back-EMF at low speed.}

\textcolor{black}{Furthermore, Gaussian noise $N(0,0.5)$ is injected into the $\alpha$-$\beta$ currents at 1000 rpm and full load to evaluate noise immunity. As shown in Fig. \ref{noise}, all methods keep bounded estimation errors. High-frequency fluctuations appear in the estimated position, but the steady-state bias remains stable, confirming acceptable noise immunity for all methods.}

\textcolor{black}{Overall, Figs. \ref{R_0.5R_1.5R}-\ref{noise} show that the proposed EFC method compensates for parameter mismatches through equivalent flux. The mismatch-induced flux distortion is effectively absorbed into the equivalent flux term.}

\subsection{Analysis of Initial Sensitivity and Convergence Dynamics }
\textcolor{black}{To further verify the stability and robustness of the proposed EFC, experiments are conducted under larger parameter mismatches, different initial flux, and varying observer gains.}

\textcolor{black}{Fig. \ref{0.2R_1.2R} presents the results at 1000~rpm and full load with larger sinusoidal variations $R_z + 0.8R_z \sin(2\pi t)$ and $L_z + 0.8L_z \sin(2\pi t)$. Under resistance mismatch, the estimation error remains around $-0.13$ rad. Under inductance mismatch, the error increases slightly. For initial flux values ranging from $0.2\psi_z$ to $1.8\psi_z$, $\hat{\psi}z$ converges to the same steady-state value, demonstrating robustness against large initial deviations. Different observer gains $(\hat{s}_{p1}, \hat{s}_{p2})$ mainly affect the convergence speed. Larger gains accelerate convergence, whereas excessively large gains introduce transient oscillations.}

\textcolor{black}{Fig.~\ref{gamma} shows the acceleration process from 200~rpm to 1000~rpm under full load. The estimation error remains bounded during the speed transition. Under different observer gains $\phi$, the steady-state performance after full-load application remains nearly identical, further confirming the stability and robustness of the proposed method.}

\textcolor{black}{Moreover, Table \ref{Table.THD} summarizes the current total harmonic distortion (THD) of the four methods at 1000 rpm under full-load conditions, together with their computation time in DSP implementation. The proposed method exhibits a slightly higher current THD but requires shorter computation time, indicating reduced computational resource requirements.}

\section{Discussion}

\textcolor{black}{Experimental results demonstrate the high estimation accuracy and robustness of the proposed EFC method. Nevertheless, two aspects warrant further investigation. First, extending the method to interior PMSMs (IPMSMs) requires incorporating an inductance matrix into the observer dynamics \cite{khlaief2011nonlinear}. While theoretically compatible due to the adaptive mechanism, this extension requires experimental validation. Second, although the EFC framework eliminates resistance and flux mismatches, its decoupling capability against inductance variations remains limited. Eliminating this remaining inductance sensitivity in future designs holds the potential to realize a fully model-free sensorless drive system.}

\begin{table}[!t] \color{black}
  \caption{Current THD and Computational Time of Four Methods}
  \renewcommand\arraystretch{1.25} 
  \small
  \centering
  \setlength{\tabcolsep}{2mm} 
  \begin{tabular}{c c c c c c} \hline \hline
     & Method 1 & Method 2 & Method 3 & Proposed EFC   \\ \hline 
    THD (\%) & $\underline{5.39}$ & $6.44$ & $5.55$ & ${6.03}$ \\
    Time ($\mu s$) & ${16.18}$ & $72.32$ &$ \underline{15.59}$ & ${17.01}$ \\
    \hline  \hline 
  \end{tabular}
  \label{Table.THD}
\end{table}

\section{Conclusion}

To address the estimation accuracy degradation caused by parameter mismatches in sensorless control, this paper proposes an EFC method based on a flux observer integrated with an equivalent flux update law. Formulated within a geometric error framework, the proposed scheme dynamically updates both the amplitude and angle of the equivalent flux based on the error dynamics under parameter mismatches. Furthermore, a rigorous stability proof for the overall observer system is established by leveraging Lyapunov stability theory. These analytical insights explicitly reveal the position estimation behavior under various mismatch scenarios, which are subsequently validated through comparative experiments. \textcolor{black}{Benchmarked against three methods, the proposed EFC strategy guarantees exceptional estimation robustness under diverse operating conditions; even under severe resistance and flux mismatches, the estimation error is effectively restored to its original baseline state, confirming its strong robustness to parameter variations.}

\appendices
\section{Proof of Lemma 1} \label{proof_Lemma1}
\begin{proof}
\textcolor{black}{By symmetry, it suffices to prove the inequality for $a \ge 0$, where it simplifies to $\tanh(a) \ge \frac{a}{1+a}$. This is equivalent to proving $1 - \tanh(a) \le \frac{1}{1+a}$. Using the exponential definition of $\tanh(a)$, we have:}
\begin{equation}
    \textcolor{black}{1 - \tanh(a) = \frac{2}{e^{2a}+1} \le \frac{1}{1+a}
\label{eq73}}
\end{equation}

\textcolor{black}{Rearranging the inequality in \eqref{eq73} yields:}
\begin{equation}
   \textcolor{black}{ e^{2a} \ge 1 + 2a}
\label{eq74}
\end{equation}
\textcolor{black}{which unconditionally holds according to the fundamental exponential inequality $e^x \ge 1+x, \forall x \in \mathbb{R}$. Equality holds if and only if $a=0$, which completes the proof.}

\end{proof}

\section{Proof of Lemma 2} \label{proof_Lemma2}
\begin{proof} \textcolor{black}{
The proofs for $Q_1$, $Q_1^*$, and $Q_2^*$ are presented below, respectively:}
\begin{equation}\textcolor{black}{
    \begin{aligned}
        Q_1 & = (x_1 - \hat{x}_1) \cos \theta + (x_2 - \hat{x}_2) \sin \theta \\
        & = \psi_m (\cos \theta - \cos \hat{\theta}) \cos \theta \\
        & \quad + \psi_m (\sin \theta - \sin \hat{\theta}) \sin \theta \\
        & = \psi_m (1 - \cos \tilde{\theta})
    \end{aligned}
\label{eqA81}}
\end{equation}
\begin{equation}\textcolor{black}{
    \begin{aligned}
        Q_1^* & = (x_1 - \hat{x}_1^*) \cos \theta + (x_2 - \hat{x}_2^*) \sin \theta \\
        & = (\Delta L_s i_\alpha + \psi_m \cos \theta - \psi_z \cos \hat{\theta}^*) \cos \theta \\
        & \quad + (\Delta L_s i_\beta + \psi_m \sin \theta - \psi_z \sin \hat{\theta}^*) \sin \theta \\
        & = \Delta L_s i_d + \psi_m - \psi_z \cos \tilde{\theta}^*
    \end{aligned}
\label{eqA82}}
\end{equation}
\begin{equation}\textcolor{black}{
    \begin{aligned}
        Q_2^* & = (x_1 - \hat{x}_1^*) i_\alpha + (x_2 - \hat{x}_2^*) i_\beta \\
        & = (\Delta L_s i_\alpha + \psi_m \cos \theta - \psi_z \cos \hat{\theta}^*) i_\alpha \\
        & \quad + (\Delta L_s i_\beta + \psi_m \sin \theta - \psi_z \sin \hat{\theta}^*) i_\beta \\
        & =  \Delta L_s \|i_{\alpha\beta}\|^2 + \psi_m i_d - \psi_z \hat{i}_d^*
    \end{aligned}
\label{eqA83}}
\end{equation}

\end{proof}

\section{Proof of Theorem 1} \label{proof_1}
\begin{proof}
    
    \textcolor{black}{The Lyapunov function \eqref{eq39} for the flux observer under accurate parameters takes the following form:}
    \begin{equation}\textcolor{black}{
        \begin{aligned}
            \dot V_1  &  =  \phi \xi_1(\tilde{x}) \tilde{x}^\top \left( \tilde{x} - \psi_m \begin{bmatrix}
        \cos \theta \\
        \sin \theta
    \end{bmatrix} \right) \\
            & = \phi \left( 2\psi_m Q_1 - \|\tilde{x}\|^2 \right) \left( \|\tilde{x}\|^2 - \psi_m Q_1 \right) \\
        \end{aligned}
    \label{eq40}}
    \end{equation}

\textcolor{black}{Equation \eqref{eq40} clearly describes a quadratic polynomial in terms of $\|\tilde{x}\|^2$ with a negative leading coefficient. Because its discriminant $\mathcal{D} = 4 \phi^2 \psi_m^2 Q_1^2$ is strictly positive, the equation possesses two distinct real roots. According to the Lyapunov stability criteria for such dynamics, $\|\tilde{x}\|^2$ will converge to the larger root, yielding $\lim\limits_{t\to\infty} \|\tilde{x}\| = \sqrt{2 \psi_m Q_1} \leq 2\psi_m$.}

\textcolor{black}{It should be noted that according to Lyapunov stability theory, $\|\tilde{x}\|^2$ would converge to zero if the initial state satisfies $V_1(0) < \psi_m Q_1$. However, since the observer is typically initialized with large values in practical applications, this scenario is omitted here and in the subsequent analysis of $V_2$.}

\end{proof}

\section{Proof of Theorem 2} \label{proof_2}
\begin{proof}
    
    \textcolor{black}{According to \eqref{eq17}-\eqref{eq19} and {Lemma \ref{lemma 2}}, the time derivative of \eqref{eq45} is described as:}
    \begin{equation}\textcolor{black}{
        \begin{aligned}
            \dot V_2  & =  \tilde{x}^{*\top} \dot{\tilde{x}}^* \\
            & = \phi \tilde{x}^{*\top}\Big( \tilde{x}^* - \psi_m \begin{bmatrix} \cos \theta \\ \sin \theta  \end{bmatrix}  - \Delta L_s i_{\alpha \beta } \Big) \left( \xi_1^* + \xi_2^* \right) \\
            & \quad -\Delta R_s \tilde{x}^{*\top} i_{\alpha \beta } \\
            & = \phi \big( \|\tilde{x}^*\|^2 - \psi_m Q_1^* - \Delta L_s Q_2^* \big) \big( -\|\tilde{x}^*\|^2   \\
            & \quad + 2 \psi_m Q_1^* + \xi_2^* \big) - \Delta R_s Q_2^*
        \end{aligned}
    \label{eq47}}
    \end{equation}

    \textcolor{black}{Next, the impacts of mismatches in flux, resistance, and inductance on the stability of the nonlinear flux observer are analyzed separately. Note that the effect of the discriminant $\mathcal{D}^*$ in \eqref{eq47} does not need explicit consideration. Since the leading coefficient is negative, $\mathcal{D}^* \leq 0$ directly implies $\dot V_2 \leq 0$, ensuring stability and convergence of $\|\tilde{x}^*\|$ to zero. Therefore, the subsequent analysis assumes $\mathcal{D}^* > 0$.}

    \subsection{Effect of Flux Mismatch}
   \textcolor{black}{ When exclusively considering the effect of flux mismatch $\Delta \psi_m < \psi_m$, the resistance and inductance mismatches are set to zero, i.e., $\Delta R_s = \Delta L_s = 0$. In this case, one obtains $Q_1^* = \psi_m - \psi_z \cos \tilde{\theta}^*$, $Q_2^* = \psi_m i_d - \psi_z \hat{i}_d^*$, and $\xi_2^2 = \psi_z^2 - \psi_m^2$. Substituting these expressions into \eqref{eq47}, the equation is rewritten as follows:}
 \begin{equation} \textcolor{black}{
     \begin{aligned}
         \dot V_2 & = \phi \big( \|\tilde{x}^*\|^2 - r_{\psi_{m1}} \big) \big( -\|\tilde{x}^*\|^2 + r_{\psi_{m2}} \big)
     \end{aligned}
     \label{eq101}}
 \end{equation}
\textcolor{black}{where $r_{\psi_{m1}} = \psi_m^2 - \psi_m \psi_z \cos \tilde{\theta}^*$ and $r_{\psi_{m2}} = \psi_m^2 + \psi_z^2 - 2 \psi_m \psi_z \cos \tilde{\theta}^*$ denote the two roots of $\dot V_2$ under flux mismatch. According to Lyapunov stability theory, $\|\tilde{x}^*\|^2$ converges to the larger root. Therefore, when $\psi_z \geq \psi_m \cos \tilde{\theta}^*$, one has $\lim\limits_{t\to\infty} \|\tilde{x}^*\|^2 = r_{\psi_{m2}}$, whereas when $0 < \psi_z < \psi_m \cos \tilde{\theta}^*$, it follows that $\lim\limits_{t\to\infty} \|\tilde{x}^*\| ^2= r_{\psi_{m1}}$.}

\textcolor{black}{Fig. \ref{V_2_psi} illustrates the steady-state error boundaries $r_{\psi_{m1}}$ and $r_{\psi_{m2}}$ under flux mismatch. The results indicate that the estimation error boundaries are directly determined by the deviation of $\psi_z$ from $\psi_m$, where a larger mismatch leads to an increased steady-state error.}

\begin{figure}[!t]
    \centering
    \setlength{\subfigbottomskip}{-2pt} %
    \setlength{\subfiglabelskip}{-10pt} %
    \setlength{\subfigcapskip}{-6pt}   %
    \subfigure[]{
        \includegraphics[width=0.4\linewidth]{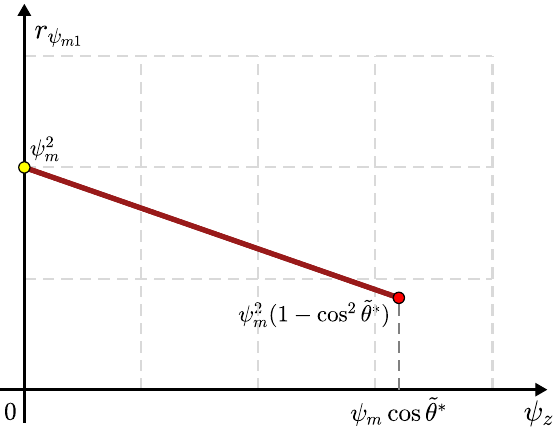}
    }
    %
    \subfigure[]{
        \includegraphics[width=0.4\linewidth]{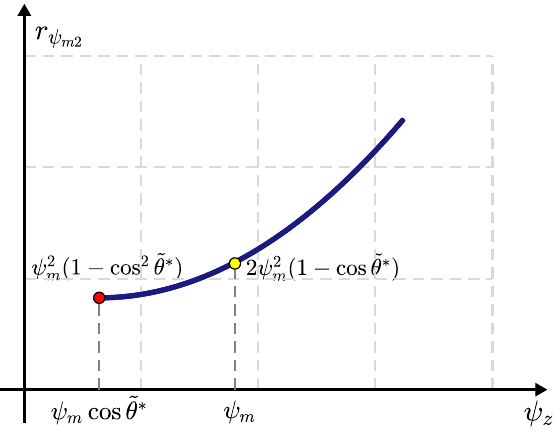}
    }
    \caption{\textcolor{black}{Steady-state estimation error boundaries under flux mismatch. (a) $0 < \psi_z < \psi_m \cos \tilde{\theta}^*$; (b) $\psi_z \geq \psi_m \cos \tilde{\theta}^*$.}}
    \label{V_2_psi}
\end{figure}

\subsection{Effect of Inductance Mismatch}
    \textcolor{black}{When examining the effect of inductance mismatch $\Delta L_s < L_s$, the flux and resistance mismatches are set to zero, that is, $\Delta \psi_m = \Delta R_s = 0$. In this case, the corresponding terms become $Q_1^* = \Delta L_s i_d + \psi_m - \psi_m \cos \tilde{\theta}^*$, $Q_2^* = \Delta L_s \|i_{\alpha\beta}\|^2 + \psi_m i_d - \psi_m \hat{i}^*_d$, and $\xi_2^* = \Delta L_s^2 \|i_{\alpha\beta}\|^2 - 2 \Delta L_s \psi_m \hat{i}_d^*$. Accordingly, \eqref{eq47} is rewritten as follows:}
\begin{equation}\textcolor{black}{
     \begin{aligned}
         \dot V_2 & = \phi \big( \|\tilde{x}^*\|^2 - r_{L_{s1}} \big) \big( -\|\tilde{x}^*\|^2 + r_{L_{s2}} \big)
     \end{aligned}
     \label{eq102}}
 \end{equation}
\textcolor{black}{where $r_{L_{s1}} = \psi_m^2(1 - \cos \tilde{\theta}^*) + \Delta L_s^2 \|i_{\alpha\beta}\|^2 + \Delta L_s \psi_m (2 i_d - \hat{i}_d^*)$  and $r_{L_{s2}} = 2\psi_m^2(1 - \cos \tilde{\theta}^*) + \Delta L_s^2 \|i_{\alpha\beta}\|^2 + 2 \Delta L_s \psi_m (i_d - \hat{i}_d^*) $ denote the two roots of $\dot V_2$ under inductance mismatch. Similarly, $\|\tilde{x}^*\|^2$ converges to the larger of the two roots. It thus remains to identify which root is dominant:}
\begin{equation}\textcolor{black}{
    r_{L_{s2}} - r_{L_{s1}} = \psi_m \big( \psi_m (1-\cos\tilde{\theta}^*) - \Delta L_s \hat{i}_d^* \big)
    \label{eq103}}
\end{equation}

\textcolor{black}{It is clear that \eqref{eq103} is positive, since $\psi_m$ and $(1 - \cos \tilde{\theta}^*)$ are smaller in magnitude than $\Delta L_s$, and $\hat{i}_d^*$ is of comparable magnitude. Therefore, it follows that $\lim\limits_{t\to\infty} \|\tilde{x}^*\| ^2= r_{L_{s2}}$ and:}
\begin{equation}\textcolor{black}{
    \Delta L_{s,a} = - \frac{\psi_m (i_d - \hat{i}_d^*)}{\|i_{\alpha\beta}\|^2}
    \label{eq104}}
\end{equation}
\textcolor{black}{where $\Delta L_{s,a}$ denotes the axis of symmetry of \(r_{L_{s2}}\) as a function of \(\Delta L_s\).}

\textcolor{black}{Fig. \ref{V_2_Ls} illustrates the steady-state error boundaries $r_{L_{s2}}$ under inductance mismatch. In both cases, $r_{L_{s2}}$ follows a parabolic trend, reaching a minimum at the symmetry axis $\Delta L_{s,a}$. Specifically, a properly sized negative mismatch reduces the error when $i_d > \hat{i}_d^*$, whereas a positive mismatch is beneficial when $i_d < \hat{i}_d^*$. However, excessive deviations in either direction significantly degrade system stability and may lead to observer divergence.}

\begin{figure}[!t]
    \centering
    \setlength{\subfigbottomskip}{-2pt} %
    \setlength{\subfiglabelskip}{-10pt} %
    \setlength{\subfigcapskip}{-6pt}   %
    \subfigure[]{
        \includegraphics[width=0.4\linewidth]{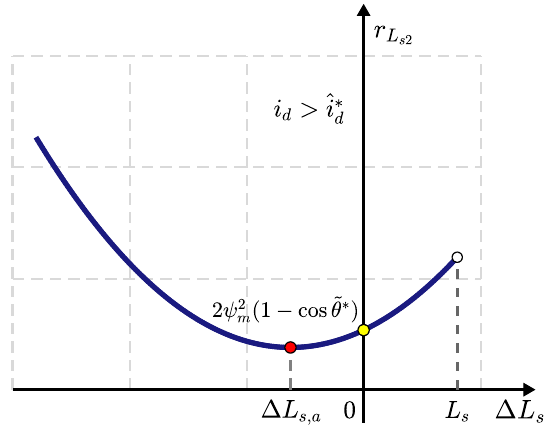}
    }
    %
    \subfigure[]{
        \includegraphics[width=0.4\linewidth]{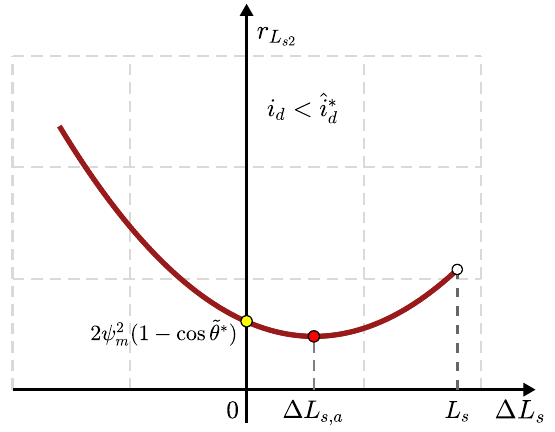}
    }
    \caption{\textcolor{black}{Steady-state estimation error boundaries under inductance mismatch. (a) $i_d > \hat{i}_d^*$; (b) $i_d < \hat{i}_d^*$.}}
    \label{V_2_Ls}
\end{figure}

\subsection{Effect of Resistance Mismatch}
\textcolor{black}{When analyzing the effect of resistance mismatch $\Delta R_s < R_s$, the flux and inductance mismatches are set to zero, namely $\Delta \psi_m = \Delta L_s = 0$. Under this condition, the corresponding terms reduce to $Q_1^* = \psi_m - \psi_m \cos \tilde{\theta}^*$, $Q_2^* = \psi_m i_d - \psi_m \hat{i}_d^*$, and $\xi_2^* = 0$. Thus, \eqref{eq47} is rewritten as follows:}
\begin{equation}\textcolor{black}{
     \begin{aligned}
         \dot V_2 & = \phi \big( \|\tilde{x}^*\|^2 - \psi_m Q_1^* \big) \big( -\|\tilde{x}^*\|^2  + 2\psi_m Q_1^* \big) \\
         & \quad - \Delta R_s \psi_m (i_d - \hat{i}_d^*) \\
         & = \phi \big( \|\tilde{x}^*\|^2 - r_{R_{s1}} \big) \big( -\|\tilde{x}^*\|^2 + r_{R_{s2}} \big)
     \end{aligned}
     \label{eq105}}
 \end{equation}
\textcolor{black}{where $r_{R_{s1}}$ and $r_{R_{s2}}$ denote the two roots of $\dot V_2$ under resistance mismatch. In the nominal case, the roots are $\psi_m Q_1^*$ and $2\psi_m Q_1^*$. The resistance mismatch introduces a constant perturbation that vertically shifts the quadratic, yielding the modified roots $r_{R_{s1}}$ and $r_{R_{s2}}$. Consequently, it follows that $\lim\limits_{t\to\infty} \|\tilde{x}^*\| ^2= r_{R_{s2}}$:}
\begin{equation}\textcolor{black}{
    r_{R_{s2}} = \frac{3\psi_m Q_1^{*}}{2} + \sqrt{\frac{\psi_m^2 Q_1^{*2}}{4} - \dfrac{\Delta R_s \psi_m (i_d - \hat{i}_d^*)}{\phi}}
    \label{eq106}}
\end{equation}

\textcolor{black}{Fig. \ref{V_2_Rs} illustrates the steady-state error boundaries $r_{R_{s2}}$ under resistance mismatch. Based on \eqref{eq106}, when $i_d < \hat{i}_d^*$, a negative mismatch reduces the convergence radius, whereas it amplifies the error when $i_d > \hat{i}_d^*$. However, excessive deviations lead to either a significantly increased error or a negative term under the radical in \eqref{eq106}, the latter resulting in the absence of a real solution; both of which potentially lead to observer divergence.}

\begin{figure}[!t]
    \centering
    \setlength{\subfigbottomskip}{-2pt} %
    \setlength{\subfiglabelskip}{-10pt} %
    \setlength{\subfigcapskip}{-6pt}   %
    \subfigure[]{
        \includegraphics[width=0.4\linewidth]{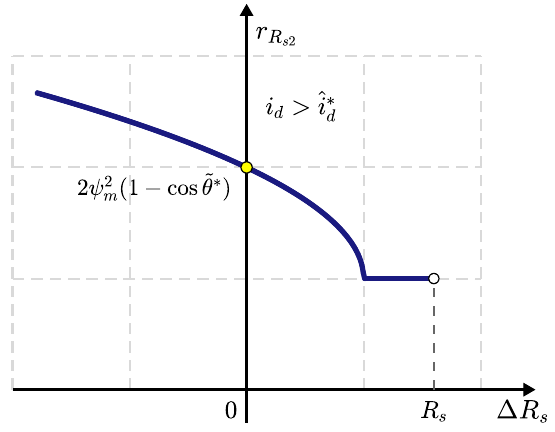}
    }
    %
    \subfigure[]{
        \includegraphics[width=0.4\linewidth]{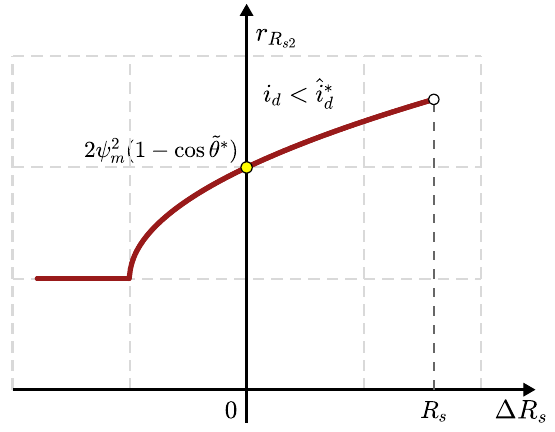}
    }
    \caption{\textcolor{black}{Steady-state estimation error boundaries under resistance mismatch. (a) $i_d > \hat{i}_d^*$; (b) $i_d < \hat{i}_d^*$.}}
    \label{V_2_Rs}
\end{figure}
    
\end{proof}

\section{Proof of Theorem 3} \label{proof_3}
\begin{proof}
    The time derivative of $\tilde{\psi}_z$ is given by $\dot{\tilde{\psi}}_z = - \hat{s}_{p1} \tanh (\hat{s}_{p2} \tilde{\psi}_z) - \| \dot{\eta} (\hat{x}^*)\|$. Considering the upper bound $\| \dot{\eta} (\hat{x}^*)\| \leq \Psi$, the Lyapunov function $V_3$ satisfies:
    \begin{equation}
        \dot V_3 = \tilde{\psi}_z \dot{\tilde{\psi}}_z \leq - \hat{s}_{p1} |\tilde{\psi}_z| \cdot |\tanh (\hat{s}_{p2} \tilde{\psi}_z)| +  |\tilde{\psi}_z| \Psi
    \label{eq61}
    \end{equation}

    According to {Lemma \ref{lemma 1}}, the inequality in \eqref{eq61} can be further bounded by:
    \begin{equation}
        \dot V_3 \leq -\frac{\hat{s}_{p1} \hat{s}_{p2}}{1 + \hat{s}_{p2} |\tilde{\psi}_z|} |\tilde{\psi}_z|^2 +  |\tilde{\psi}_z| \Psi
    \label{eq62}
    \end{equation}

    Letting $\gamma = \sqrt{2V_3} = |\tilde{\psi}_z|$, it follows that $\dot{V}_3 = \gamma \dot{\gamma}$. Thus, \eqref{eq62} becomes:
    \begin{equation}
        \dot \gamma \leq -\frac{\hat{s}_{p1} \hat{s}_{p2} \gamma }{1+\hat{s}_{p2} \gamma } + \Psi = f(\gamma)
    \label{eq63}
    \end{equation}

    Evidently, $f(\gamma)$ is Lipschitz continuous and strictly decreasing for $\gamma \geq 0$, since $|f'(\gamma)| = \left| - \hat{s}_{p1} \hat{s}_{p2} / (1 + \hat{s}_{p2} \gamma)^2 \right| \leq \hat{s}_{p1} \hat{s}_{p2} $. Therefore, setting $f(\gamma) = 0$ yields a unique positive equilibrium point $\gamma^*$:
    \begin{equation}
        \gamma^* = \frac{\Psi}{\hat{s}_{p2} (\hat{s}_{p1} - \Psi)}
    \label{eq64}
    \end{equation}

    Based on the comparison theorem, $\gamma^*$ is a stable equilibrium, and all solutions of $\gamma$ will eventually converge to the region bounded by $\gamma^*$. Consequently, the transient bound and the ultimate steady-state bound are established as:
    \begin{equation}
        | \tilde{\psi }_z |_\infty  = \lim_{t \rightarrow  \infty} \gamma = \frac{\Psi }{\hat{s}_{p2} (\hat{s}_{p1} - \Psi) } 
    \label{eq68}
    \end{equation}
    where $\tilde{\psi}_z(0)$ denotes the initial value.

\end{proof}

\bibliographystyle{IEEEtran}
\bibliography{IEEEabrv, ref}

\begin{thebibliography}{10}
\providecommand{\url}[1]{#1}
\csname url@samestyle\endcsname
\providecommand{\newblock}{\relax}
\providecommand{\bibinfo}[2]{#2}
\providecommand{\BIBentrySTDinterwordspacing}{\spaceskip=0pt\relax}
\providecommand{\BIBentryALTinterwordstretchfactor}{4}
\providecommand{\BIBentryALTinterwordspacing}{\spaceskip=\fontdimen2\font plus
\BIBentryALTinterwordstretchfactor\fontdimen3\font minus \fontdimen4\font\relax}
\providecommand{\BIBforeignlanguage}[2]{{%
\expandafter\ifx\csname l@#1\endcsname\relax
\typeout{** WARNING: IEEEtran.bst: No hyphenation pattern has been}%
\typeout{** loaded for the language `#1'. Using the pattern for}%
\typeout{** the default language instead.}%
\else
\language=\csname l@#1\endcsname
\fi
#2}}
\providecommand{\BIBdecl}{\relax}
\BIBdecl

\bibitem{liu2016research}
X.~Liu, H.~Chen, J.~Zhao, and A.~Belahcen, ``Research on the performances and parameters of interior pmsm used for electric vehicles,'' \emph{IEEE Transactions on Industrial Electronics}, vol.~63, no.~6, pp. 3533--3545, 2016.

\bibitem{wang2019position}
G.~Wang, M.~Valla, and J.~Solsona, ``Position sensorless permanent magnet synchronous machine drives—a review,'' \emph{IEEE Transactions on Industrial Electronics}, vol.~67, no.~7, pp. 5830--5842, 2019.

\bibitem{bolognani2014design}
S.~Bolognani, S.~Calligaro, and R.~Petrella, ``Design issues and estimation errors analysis of back-emf-based position and speed observer for spm synchronous motors,'' \emph{IEEE Journal of Emerging and Selected Topics in Power Electronics}, vol.~2, no.~2, pp. 159--170, 2014.

\bibitem{he2023optimization}
W.~He, X.~Wu, and J.~Chen, ``Optimization design of pmsm sensorless control using generalized integrator,'' \emph{IEEE Transactions on Industrial Electronics}, vol.~71, no.~8, pp. 8625--8634, 2023.

\bibitem{zhang2023commutation}
H.~Zhang, L.~Deng, H.~Li, S.~Zheng, H.~Jin, and B.~Chen, ``Commutation point optimization method for sensorless bldc motor control using vector phase difference of back emf and current,'' \emph{IEEE/ASME Transactions on Mechatronics}, vol.~29, no.~1, pp. 423--433, 2023.

\bibitem{prabhakaran2020electromagnetic}
K.~Prabhakaran and A.~Karthikeyan, ``Electromagnetic torque-based model reference adaptive system speed estimator for sensorless surface mount permanent magnet synchronous motor drive,'' \emph{IEEE Transactions on Industrial Electronics}, vol.~67, no.~7, pp. 5936--5947, 2020.

\bibitem{yan2022mras}
X.~Yan and M.~Cheng, ``An mras observer-based speed sensorless control method for dual-cage rotor brushless doubly fed induction generator,'' \emph{IEEE Transactions on Power Electronics}, vol.~37, no.~10, pp. 12\,705--12\,714, 2022.

\bibitem{verrelli2019speed}
C.~M. Verrelli, S.~Bifaretti, E.~Carfagna, A.~Lidozzi, L.~Solero, F.~Crescimbini, and M.~Di~Benedetto, ``Speed sensor fault tolerant pmsm machines: From position-sensorless to sensorless control,'' \emph{IEEE Transactions on Industry Applications}, vol.~55, no.~4, pp. 3946--3954, 2019.

\bibitem{xiang2025sensorless}
F.~Xiang, K.~Yu, C.~Chen, and S.~Li, ``Sensorless control of pmsm drives using octant newton-raphson method and reduced-order ekf considering speed reversal and noise,'' \emph{IEEE Transactions on Power Electronics}, 2025.

\bibitem{wu2024sensorless}
X.~Wu, C.~Li, Y.~Zhang, S.~Chen, Z.~Ma, Y.~Han, X.~Zhang, and G.~Tan, ``Sensorless control of ipmsm equipped with lc sinusoidal filter based on full-order sliding mode observer and feedforward qpll,'' \emph{IEEE Transactions on Power Electronics}, vol.~39, no.~7, pp. 8072--8085, 2024.

\bibitem{yang2024rotor}
Q.~Yang, K.~Mao, S.~Zheng, and Y.~Le, ``Rotor position estimation based on fast terminal sliding mode for magnetic suspension centrifugal compressor drives,'' \emph{IEEE Transactions on Instrumentation and Measurement}, 2024.

\bibitem{lee2009sensorless}
J.~Lee, J.~Hong, K.~Nam, R.~Ortega, L.~Praly, and A.~Astolfi, ``Sensorless control of surface-mount permanent-magnet synchronous motors based on a nonlinear observer,'' \emph{IEEE Transactions on power electronics}, vol.~25, no.~2, pp. 290--297, 2009.

\bibitem{11099531}
F.~Zhou, X.~Wang, Z.~Yin, Y.~Shen, Y.~Liang, and H.~Zhao, ``Optimized robust observer-based encoderless control for spmsm,'' \emph{IEEE Transactions on Industrial Electronics}, vol.~72, no.~12, pp. 12\,632--12\,643, 2025.

\bibitem{ortega2010estimation}
R.~Ortega, L.~Praly, A.~Astolfi, J.~Lee, and K.~Nam, ``Estimation of rotor position and speed of permanent magnet synchronous motors with guaranteed stability,'' \emph{IEEE Transactions on Control Systems Technology}, vol.~19, no.~3, pp. 601--614, 2010.

\bibitem{yin2026nonlinear}
Z.~Yin, F.~Zhou, X.~Wang, Y.~Shen, J.~Liang, X.~Su, and H.~Zhao, ``Nonlinear bounded error compensation for flux based encoderless controller of pmsms,'' \emph{IEEE Transactions on Transportation Electrification}, 2026.

\bibitem{khlaief2011nonlinear}
A.~Khlaief, M.~Bendjedia, M.~Boussak, and M.~Gossa, ``A nonlinear observer for high-performance sensorless speed control of ipmsm drive,'' \emph{IEEE Transactions on Power Electronics}, vol.~27, no.~6, pp. 3028--3040, 2011.

\bibitem{xu2018improved}
W.~Xu, Y.~Jiang, C.~Mu, and F.~Blaabjerg, ``Improved nonlinear flux observer-based second-order soifo for pmsm sensorless control,'' \emph{IEEE Transactions on Power Electronics}, vol.~34, no.~1, pp. 565--579, 2018.

\bibitem{zhou2023robust}
S.~Zhou, Z.~Zhang, H.~Li, Z.~Li, Q.~Xing, X.~Liu, F.~Wang, and J.~Rodriguez, ``A robust encoderless control for pmsm drives: A revised hybrid active flux-based technique,'' \emph{IEEE Transactions on Power Electronics}, vol.~38, no.~11, pp. 14\,438--14\,449, 2023.

\bibitem{bernard2020estimation}
P.~Bernard and L.~Praly, ``Estimation of position and resistance of a sensorless pmsm: A nonlinear luenberger approach for a nonobservable system,'' \emph{IEEE Transactions on Automatic Control}, vol.~66, no.~2, pp. 481--496, 2020.

\bibitem{hinkkanen2011combined}
M.~Hinkkanen, T.~Tuovinen, L.~Harnefors, and J.~Luomi, ``A combined position and stator-resistance observer for salient pmsm drives: Design and stability analysis,'' \emph{IEEE Transactions on Power Electronics}, vol.~27, no.~2, pp. 601--609, 2011.

\bibitem{li2015position}
W.~Li, J.~Fang, H.~Li, and J.~Tang, ``Position sensorless control without phase shifter for high-speed bldc motors with low inductance and nonideal back emf,'' \emph{IEEE Transactions on Power Electronics}, vol.~31, no.~2, pp. 1354--1366, 2015.

\bibitem{li2018sensorless}
H.~Li, Z.~Wang, C.~Wen, and X.~Wang, ``Sensorless control of surface-mounted permanent magnet synchronous motor drives using nonlinear optimization,'' \emph{IEEE Transactions on Power Electronics}, vol.~34, no.~9, pp. 8930--8943, 2018.

\bibitem{bernard2018convergence}
P.~Bernard and L.~Praly, ``Convergence of gradient observer for rotor position and magnet flux estimation of permanent magnet synchronous motors,'' \emph{Automatica}, vol.~94, pp. 88--93, 2018.

\bibitem{islam2013sensorless}
M.~N. Islam and R.~J. Seethaler, ``Sensorless position control for piezoelectric actuators using a hybrid position observer,'' \emph{IEEE/ASME Transactions On Mechatronics}, vol.~19, no.~2, pp. 667--675, 2013.

\bibitem{choi2016robust}
J.~Choi, K.~Nam, A.~A. Bobtsov, A.~Pyrkin, and R.~Ortega, ``Robust adaptive sensorless control for permanent-magnet synchronous motors,'' \emph{IEEE Transactions on Power Electronics}, vol.~32, no.~5, pp. 3989--3997, 2016.

\bibitem{li2025sensorless}
B.~Li, J.~Zou, Y.~Xu, S.~Li, and S.~Jin, ``Sensorless control of dtp-synrm with hybrid flux observer and disturbance observer considering magnetic saturation and cross-coupling effect,'' \emph{IEEE Transactions on Power Electronics}, 2025.

\bibitem{kivanc2018sensorless}
O.~C. Kivanc and S.~B. Ozturk, ``Sensorless pmsm drive based on stator feedforward voltage estimation improved with mras multiparameter estimation,'' \emph{IEEE/ASME Transactions on Mechatronics}, vol.~23, no.~3, pp. 1326--1337, 2018.

\bibitem{hamida2012adaptive}
M.~A. Hamida, J.~De~Leon, A.~Glumineau, and R.~Boisliveau, ``An adaptive interconnected observer for sensorless control of pm synchronous motors with online parameter identification,'' \emph{IEEE Transactions on Industrial Electronics}, vol.~60, no.~2, pp. 739--748, 2012.

\bibitem{liu2022second}
Z.~Liu, J.~Nie, H.~Wei, L.~Chen, F.~Wu, and M.~Lv, ``Second-order eso-based current sensor fault-tolerant strategy for sensorless control of pmsm with b-phase current,'' \emph{IEEE/ASME Transactions on Mechatronics}, vol.~27, no.~6, pp. 5427--5438, 2022.

\bibitem{nguyen2013modeling}
T.~D. Nguyen, G.~Foo, K.~Tseng, and D.~M. Vilathgamuwa, ``Modeling and sensorless direct torque and flux control of a dual-airgap axial flux permanent-magnet machine with field-weakening operation,'' \emph{IEEE/ASME Transactions On Mechatronics}, vol.~19, no.~2, pp. 412--422, 2013.

\bibitem{wang2024eso}
K.~Wang, G.~Wang, G.~Zhang, Q.~Wang, B.~Li, and D.~Xu, ``Eso-based robust hybrid flux observer with active flux error estimation for position sensorless pma-synrm drives,'' \emph{IEEE Transactions on Transportation Electrification}, vol.~11, no.~1, pp. 2049--2060, 2024.

\bibitem{woldegiorgis2022sensorless}
A.~T. Woldegiorgis, X.~Ge, Y.~Zuo, H.~Wang, and M.~Hassan, ``Sensorless control of interior permanent magnet synchronous motor drives considering resistance and permanent magnet flux linkage variation,'' \emph{IEEE Transactions on Industrial Electronics}, vol.~70, no.~8, pp. 7716--7730, 2022.

\bibitem{11025165}
D.~Yang, S.~Huang, W.~Liao, X.~Wu, X.~Yu, T.~Wu, D.~Luo, and S.~Huang, ``Iladrc-based sensorless ipmsm strategy with adaptive harmonic filtering-extended state observer for current quality improvement,'' \emph{IEEE Transactions on Power Electronics}, vol.~41, no.~2, pp. 1752--1763, 2026.

\bibitem{10949742}
S.~Xu, A.~Shen, Q.~Tang, M.~Wang, P.~Luo, X.~Luo, and J.~Xu, ``Surface-pmsm sensorless control strategy based on reduced-order linear kalman filter cooperating with prediction error rolling compensation,'' \emph{IEEE Transactions on Power Electronics}, vol.~40, no.~8, pp. 10\,804--10\,815, 2025.

\bibitem{zhou2023robust_1}
S.~Zhou, Z.~Zhang, H.~Li, Z.~Li, Q.~Xing, X.~Liu, F.~Wang, and J.~Rodriguez, ``A robust encoderless control for pmsm drives: A revised hybrid active flux-based technique,'' \emph{IEEE Transactions on Power Electronics}, vol.~38, no.~11, pp. 14\,438--14\,449, 2023.

\end{thebibliography}

\vspace{-1.65cm} 
\begin{IEEEbiography}[{\includegraphics[width=1in,height=1.25in,clip,keepaspectratio]{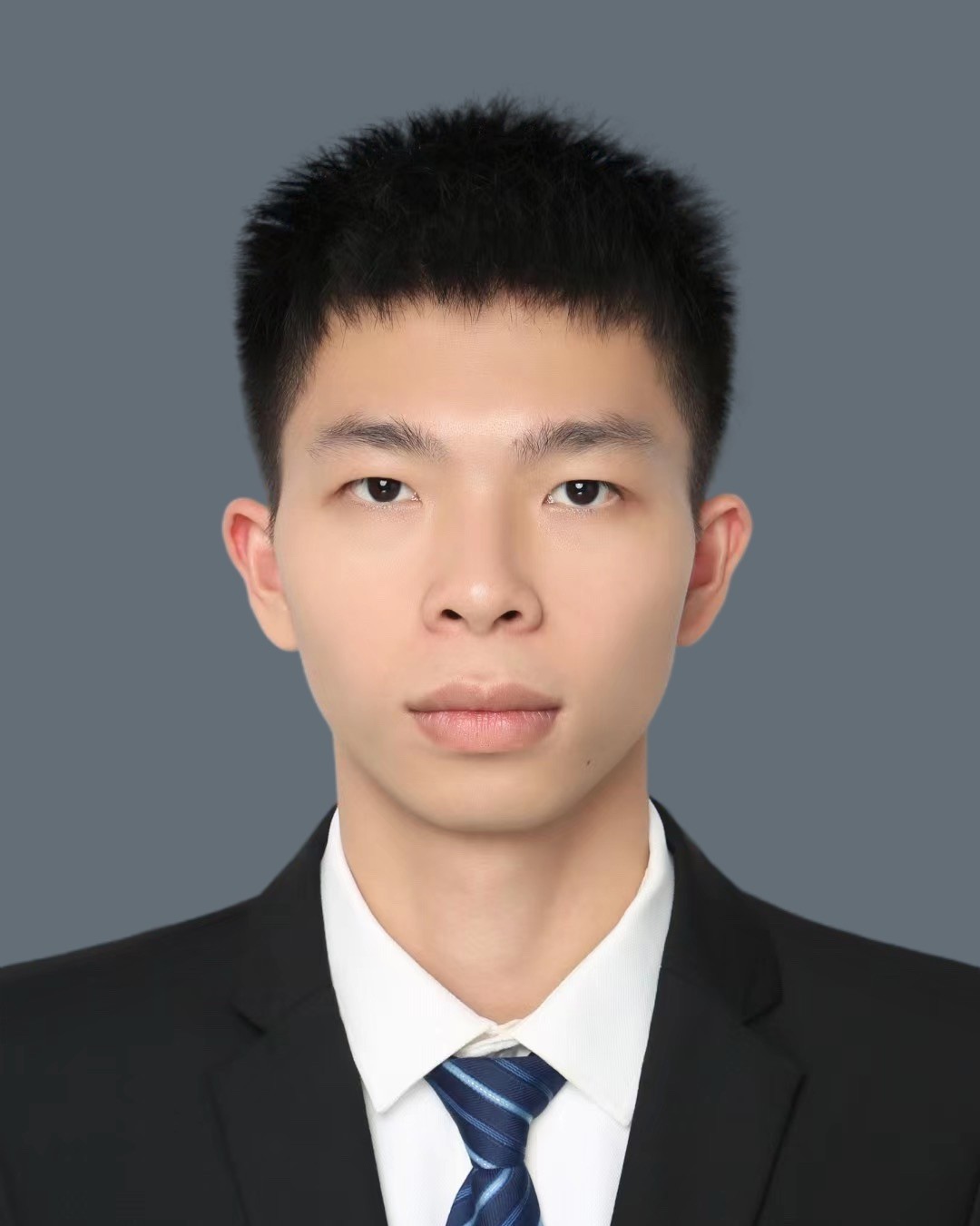}}]
{Fobao Zhou} (Graduate Student Member, IEEE) received the B.Eng. degree in Robot Engineering from Guangzhou University, Guangzhou, China, in 2022. He is currently pursuing the Ph.D. degree at The Hong Kong University of Science and Technology (Guangzhou), Guangzhou, China. His research interests include the integration of machine learning and control theory for the development of advanced intelligent systems. He aims to contribute to the field of engineering through innovative research and practical applications in these areas.
\end{IEEEbiography}

\vspace{-1.45cm} 
\begin{IEEEbiography}[{\includegraphics[width=1in,height=1.25in,clip,keepaspectratio]{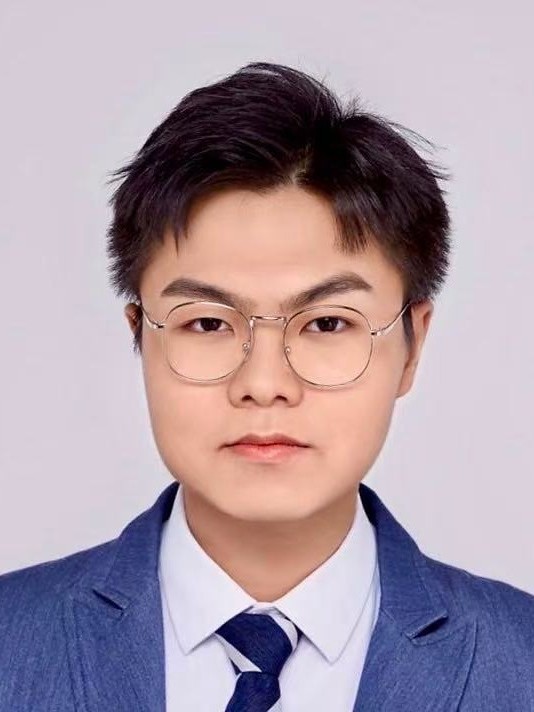}}]
{Jiaqiao Liang} received the B.Eng. degree in Mechanical Design, Manufacturing, and Automation from Guangzhou University, Guangzhou, China, in 2023. He is currently pursuing the M.Eng. degree in Mechanical Engineering at South China University of Technology, Guangzhou, China. His research focuses on nonlinear and robust control, underwater robotic systems, and bioinspired actuation, with emphasis on modeling and control of bistable-actuated soft robots and cross-medium locomotion.

\end{IEEEbiography}


\vspace{-1.45cm}
\begin{IEEEbiography}[{\includegraphics[width=1in,height=1.25in,clip,keepaspectratio]{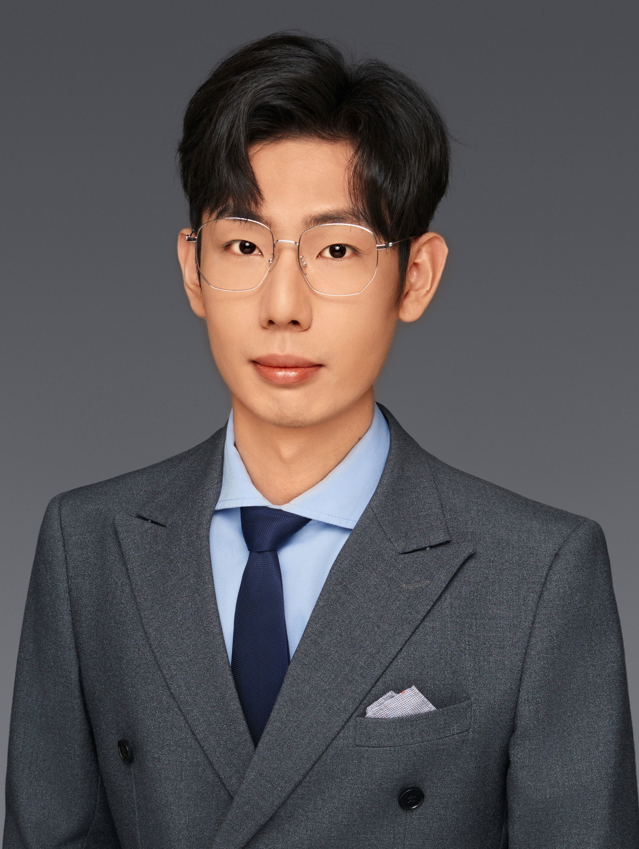}}]
{Zhenxiao Yin} (Graduate Student Member, IEEE) received a B.Sc. degree in mechanical engineering from Paderborn University (UPB), Paderborn, Germany, in 2018. Subsequently, he earned double M.Sc. degrees in Mechanical Engineering, Mechatronics and Robotics from the Technical University of Munich (TUM), Munich, Germany, in 2021 and 2022, where he was the recipient of the German National Scholarship. He is pursuing a Ph.D. in Robotics and Autonomous Systems at the Hong Kong University of Science and Technology (HKUST), while he is studying in the department of Mechanical and Aerospace Engineering at HKUST, Guangzhou and Hong Kong SAR, China. His research interests encompass learning-based control and state drop compensator for electric machines in flying vehicles.
\end{IEEEbiography}


\vspace{-1.3cm}
\begin{IEEEbiography}[{\includegraphics[width=1in,height=1.25in,clip,keepaspectratio]{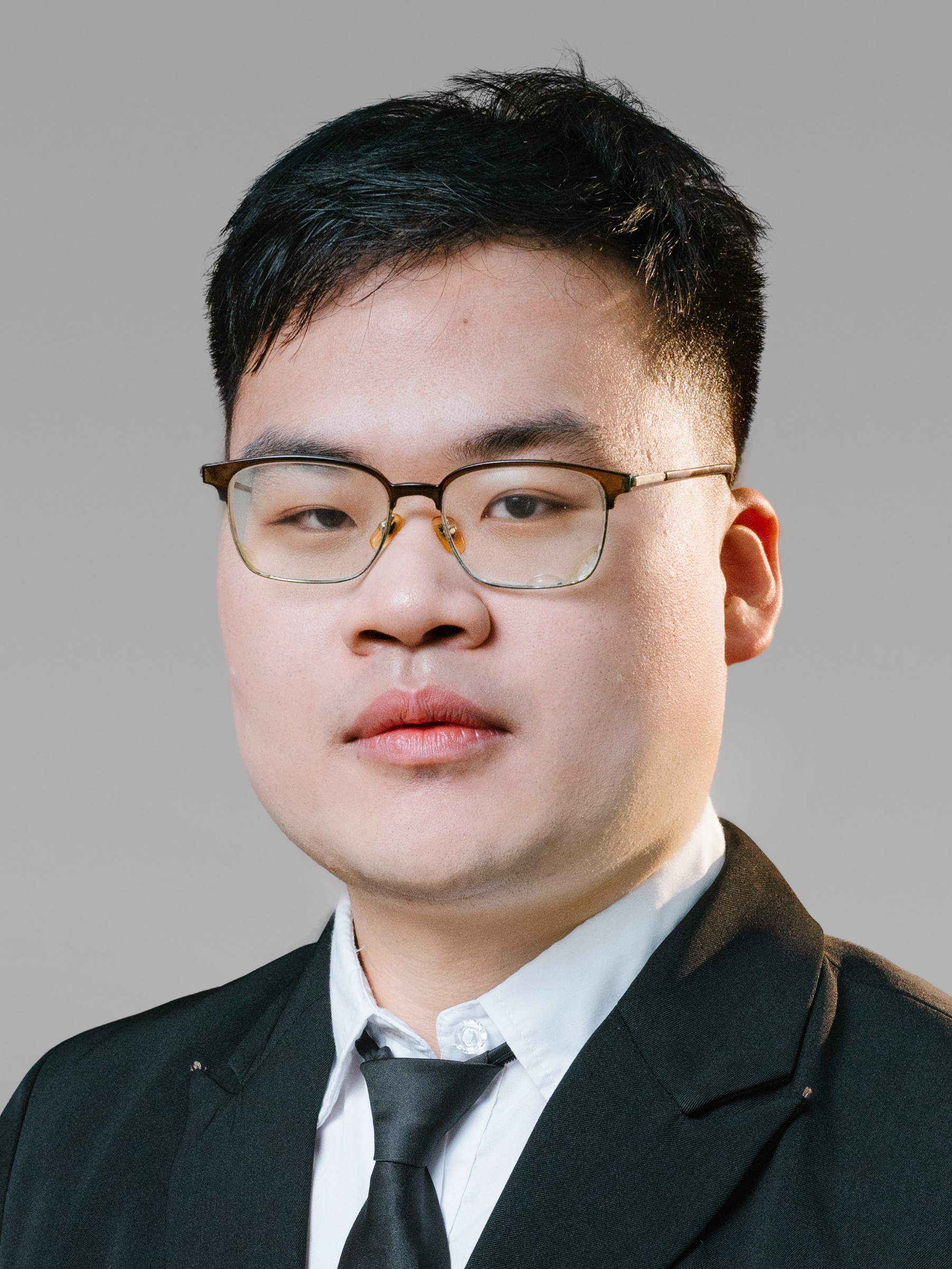}}]
{Xueyan Wang} received the B.S. degree in electrical engineering and automation and the M.S. degree in electrical engineering from Xi’an University of Science and Technology, Xi’an, China, in 2020 and 2023, respectively. He is currently a Research Assistant with The Hong Kong University of Science and Technology (Guangzhou), Guangzhou, China. His research interests include model predictive control of permanent magnet synchronous motor drives, power electronics, and power transmission.

\end{IEEEbiography}

\vspace{-1.45cm}
\begin{IEEEbiography}[{\includegraphics[width=1in,height=1.25in,clip,keepaspectratio]{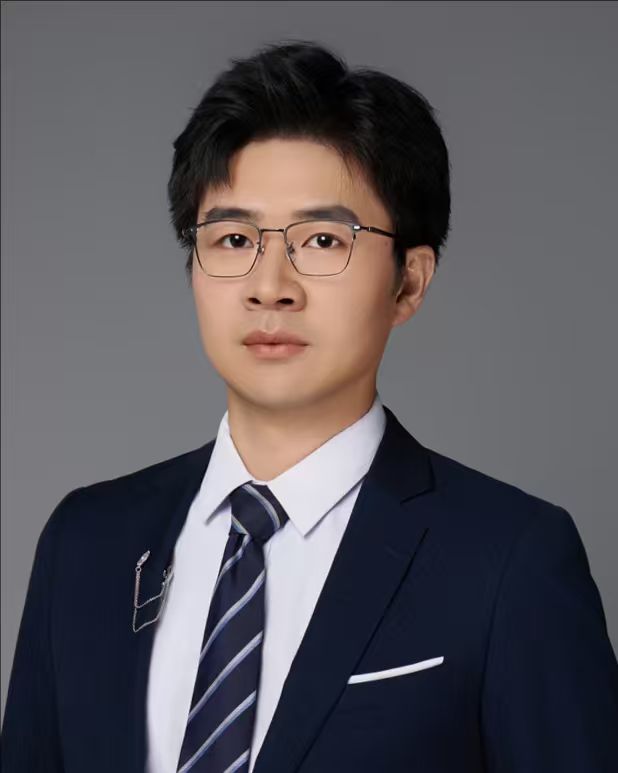}}]{Yang Shen} (Graduate Student Member, IEEE) received the B.Eng. and the M.Eng. degree from the North China University of Technology, Beijing, China, all in electrical engineering in 2018 and 2021, respectively. He is currently pursuing a Ph.D. in the Robotics and Autonomous Systems Thrust, Systems Hub, The Hong Kong University of Science and Technology (Guangzhou), Guangzhou, China. His research interests include electric machines and drives, robotic applications of motor drives, and electrified transportation systems.
 \end{IEEEbiography}

 \vspace{-1.25cm}
\begin{IEEEbiography}[{\includegraphics[width=1in,height=1.25in,clip,keepaspectratio]{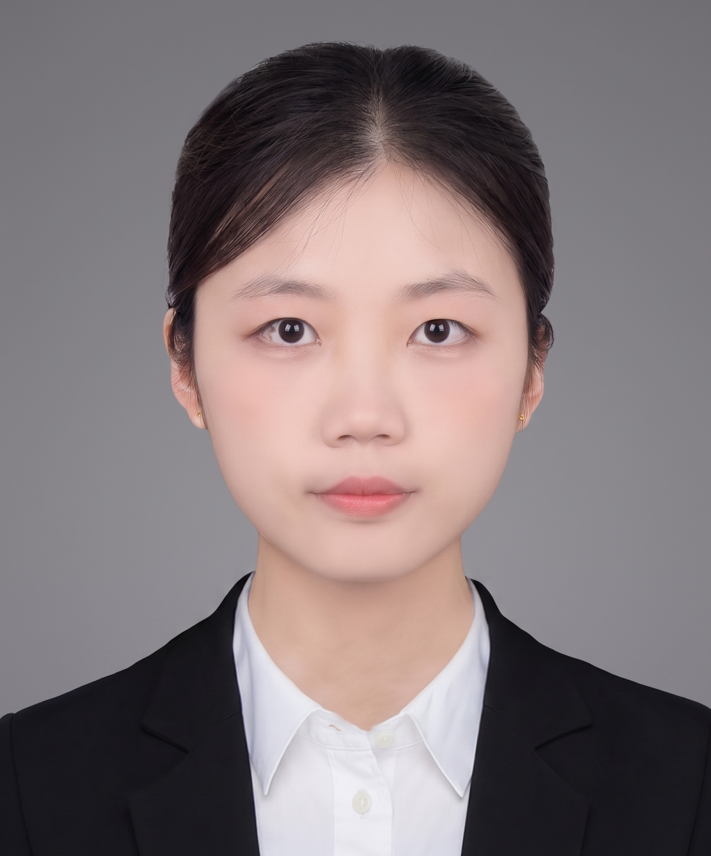}}]{Zhongyu Shi} is expected to receive the B.A. degree in Engineering Science from the University of Oxford in 2026. She is currently working toward the M.Eng. degree in Engineering Science at the University of Oxford. She is also a visiting student at the Robotics and Autonomous Systems Thrust, Systems Hub, The Hong Kong University of Science and Technology (Guangzhou). Her research interests include permanent magnet synchronous motor drives, parameter identification, and control.
 \end{IEEEbiography}

\vspace{-1.45cm}
\begin{IEEEbiography}[{\includegraphics[width=1in,height=1.25in,clip,keepaspectratio]{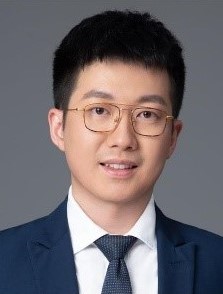}}]
{Hang Zhao} (Member, IEEE) received the B.Eng. and M.Eng. degrees from Huazhong University of Science and Technology, Wuhan, China, and the Ph.D. degree from City University of Hong Kong, Hong Kong SAR, China, all in electrical engineering in 2015, 2017, and 2021, respectively.
He is currently an Assistant Professor at the Robotics and Autonomous Systems Thrust, Systems Hub, The Hong Kong University of Science and Technology (Guangzhou), Guangzhou, China, and also an Affiliate Assistant Professor at the Department of Electronic and Computer Engineering, The Hong Kong University of Science and Technology, Hong Kong SAR, China. He was a Research Fellow at The University of Hong Kong, Hong Kong SAR, China, in 2021.
His research interests include electric machines and drives, applications of motor drives in robots, and electrified transportation.
\end{IEEEbiography}


\end{document}